\documentclass[11pt]{article}

\usepackage[final]{acl}

\usepackage[T1]{fontenc}
\usepackage[utf8]{inputenc}
\usepackage{times}
\usepackage{microtype}
\usepackage{amsmath}
\usepackage{amssymb}
\usepackage{amsthm}
\usepackage{graphicx}
\usepackage{booktabs}
\usepackage{colortbl}
\usepackage{listings}
\usepackage{multirow}
\usepackage{siunitx}
\usepackage{subcaption}
\usepackage[most]{tcolorbox}
\usepackage{float}
\usepackage[english]{babel}

\tcbset{
  fdpromptbox/.style={
    enhanced,
    width=\textwidth,
    colback=black!1!white,
    colframe=black!45,
    boxrule=0.45pt,
    arc=1.4mm,
    outer arc=1.4mm,
    left=4mm,
    right=4mm,
    top=2.8mm,
    bottom=2.8mm,
    colbacktitle=black!8!white,
    coltitle=black,
    fonttitle=\bfseries\small,
    borderline west={1pt}{0pt}{black!45},
    attach boxed title to top left={xshift=3mm,yshift*=-\tcboxedtitleheight/2},
    boxed title style={
      colback=black!8!white,
      colframe=black!45,
      boxrule=0.45pt,
      arc=1mm,
      left=2mm,
      right=2mm,
      top=0.8mm,
      bottom=0.8mm
    }
  }
}
\newtheorem{proposition}{Proposition}
\newtheorem{lemma}{Lemma}
\theoremstyle{definition}
\newtheorem{definition}{Definition}
\newtheorem{assumption}{Assumption}

\title{FD-RAG: Federated Dual-System Retrieval-Augmented Generation}

\author{Tianhao Gao \and Kai Yang\thanks{Corresponding author.} \and Yiyang Li \\
  School of Computer Science and Technology, Tongji University \\
  thgao@tongji.edu.cn, kaiyang@tongji.edu.cn}

\begin{document}
\maketitle

\begin{abstract}
Retrieval-augmented generation (RAG) has emerged as a paradigm for grounding large language models in external knowledge,
yet most existing RAG systems assume centralized knowledge access and ample computation.
These assumptions break down in edge environments, where knowledge is fragmented across devices,
raw data cannot be shared, and repeated LLM calls are prohibitively expensive.
We propose FD-RAG, a federated dual-system RAG framework
that decouples lightweight memory access from on-demand LLM reasoning
for decentralized deployment.
Specifically, FD-RAG learns semantic-aware adaptive hypergraphs
over local corpora and distills them into compact QA memories.
At inference time, it answers well-covered queries via direct memory matching
and invokes LLM-based reasoning only when necessary,
while tracing retrieved memories to hypergraph-grounded evidence.
To mitigate cross-device knowledge fragmentation,
FD-RAG aggregates anonymized memories across devices without exposing raw documents.
Experiments on QA benchmarks show that FD-RAG improves accuracy by up to 7.8\% while reducing latency by 8.4$\times$
compared with strong local and federated baselines.
We also provide theoretical analysis establishing an $\mathcal{O}(1/\epsilon^{2})$ convergence rate for the proposed hypergraph learning,
supporting its tractable deployment in edge settings.
\end{abstract}

\section{Introduction}
\label{sec:intro}
Retrieval-Augmented Generation (RAG)~\citep{lewis2020retrieval} has emerged as a standard 
paradigm for grounding large language models (LLMs) in external knowledge, 
achieving strong performance on knowledge-intensive tasks. However, 
its effectiveness implicitly relies on a centralized setting where 
both knowledge and computation are readily accessible~\citep{oche2025systematic}. 
This assumption rarely holds in practice. 
In real-world domains such as healthcare, finance, and law, knowledge is inherently 
distributed across institutions and edge devices, data sharing is 
restricted by privacy and regulation~\citep{xu2025distributed}, 
and inference must operate under strict computation and latency constraints~\citep{liu2025nested}. These challenges call for 
RAG to operate directly on edge devices while respecting data locality.

Extending RAG to such environments introduces fundamental challenges. 
Existing approaches~\citep{luo2025hypergraphrag,edge2024local} rely on 
LLMs not only for reasoning but also for knowledge construction 
and query understanding. This design becomes fragile with small 
language models (SLMs), common in edge settings: knowledge 
construction degrades~\citep{fan2025minirag}, complex queries become 
less reliable, and repeated invocations incur prohibitive latency~\citep{liu2024mobilellm}. 
Moreover, most RAG frameworks assume a unified knowledge repository~\citep{gao2023retrieval}. 
In decentralized environments, knowledge is fragmented across devices, 
leaving each node with incomplete coverage, especially for queries 
requiring multi-source evidence~\citep{chakraborty-etal-2025-federated}. 
As a result, existing systems struggle to achieve both efficiency 
and completeness in edge scenarios.

We argue that these limitations stem from treating retrieval and reasoning 
as a monolithic process mediated by language models, 
without accounting for their distinct computational roles. 
Inspired by Dual-Process Theory~\citep{kahneman2003maps}, 
which distinguishes fast, memory-based responses (System 1) from slower, 
deliberative reasoning (System 2), we propose to decouple these two modes in RAG.
This leads to \textbf{Federated Dual-System RAG (FD-RAG)}, a unified framework that separates lightweight memory access from selective reasoning, enabling efficient and expressive knowledge utilization under resource constraints.

At the knowledge construction level, FD-RAG constructs a lightweight 
yet expressive knowledge structure via semantic-aware hypergraph learning, 
capturing higher-order relations without expensive extraction. 
This structure is further distilled into question--answer (QA) memories, 
serving as a compact interface for efficient access. 
At the inference level, FD-RAG introduces two complementary modules: a \textit{Memorizer}, 
which directly resolves queries well covered by QA memory through efficient matching, 
and a \textit{Cognizer}, which selectively invokes LLM-based reasoning 
over hypergraph-grounded evidence for more complex queries. 
To address knowledge fragmentation, 
FD-RAG further incorporates a federated memory aggregation mechanism, 
enabling multiple devices to collaboratively improve coverage without sharing raw data.
Our main contributions are as follows:
\begin{itemize}
    \item We propose FD-RAG, a dual-system RAG framework for edge environments that decouples memory-based retrieval and selective LLM reasoning, enabling collaborative knowledge utilization across distributed devices.
    \item We introduce a semantic-aware hypergraph learning approach for constructing lightweight yet expressive knowledge structures, and derive compact QA memories for efficient inference under resource constraints.
    \item Experiments on standard QA benchmarks show that FD-RAG improves accuracy by up to 7.8\% while reducing latency by a factor of $8.4\times$. We further prove that the proposed hypergraph learning procedure achieves an $\mathcal{O}(1/\epsilon^{2})$ convergence rate, providing theoretical guarantees for its efficiency and stability.
\end{itemize}

\section{Related Work}
\label{sec:related_work}
% !TEX root = ../acl_lualatex.tex

\paragraph{Retrieval-Augmented Generation.}
Recent RAG research has moved beyond the retrieve-then-generate pipeline to better support complex question answering, primarily through iterative retrieval and structured knowledge modeling \citep{jin2024long, li2024long}. Iterative retrieval methods improve evidence coverage by interleaving retrieval, generation, and query reformulation, but their repeated reliance on large language models often incurs substantial latency and computational cost \citep{liu2025hoprag, asai2023self}. Another line of work organizes evidence with explicit structures, including trees~\citep{sarthi2024raptor}, graphs~\citep{gutierrez2025from, li2024graphreader}, and more recently hypergraphs~\citep{luo2025hypergraphrag}. Compared with trees and graphs, hypergraphs can capture higher-order semantic relations beyond pairwise dependencies, making them particularly appealing for complex reasoning. However, existing structured RAG methods typically rely on expensive knowledge construction, large context windows, or strong semantic reasoning capabilities, which limits their suitability for small models in resource-constrained edge settings. Although efficiency-oriented methods such as EfficientRAG reduce part of the computational burden \citep{zhuang2024efficientrag}, \textit{how to preserve the benefits of structured knowledge while enabling efficient, low-latency inference on edge devices remains largely underexplored.}

\paragraph{Federated Retrieval-Augmented Generation.}
Federated RAG extends RAG to settings where knowledge is inherently distributed across silos and cannot be centralized due to privacy or regulatory constraints. Existing work mainly explores two complementary directions. One line of research focuses on federated retrieval, enabling cross-silo access via routing or aggregation mechanisms~\citep{wang2024feb4rag,guerraoui2025ragroute,xu2024cfedrag,zhao2024frag}, but still relies on centralized coordination, limiting system autonomy and flexibility. Another line integrates RAG with federated learning, jointly optimizing model parameters across clients~\citep{he2025pfedrag,liang2026fedmosaic,fajardo2025fedragframework,ma2025cellular}, yet incurs substantial communication and computation overhead due to frequent parameter exchange. \textit{Despite these advances, prior work largely overlooks collaboration at the knowledge level, hindering support for dynamic, unstructured, and semantically rich RAG scenarios in fully decentralized environments.}

\begin{figure*}[t]
    \centering
    \includegraphics[width=\textwidth]{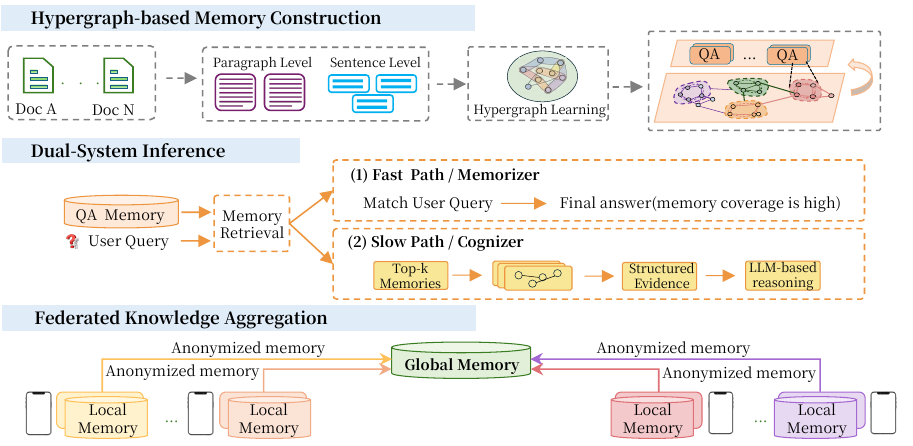}
    \caption{Overview of FD-RAG. We organize each local corpus into a semantic hypergraph and convert hyperedges into grounded QA memories. At inference time, the \textit{Memorizer} answers high-confidence queries via direct memory matching, while the \textit{Cognizer} retrieves structured evidence from supporting hyperedges and performs LLM-based reasoning. Local memories can be anonymized and aggregated across devices under federated constraints.}
    \label{fig:overview}
\end{figure*}

\section{Preliminaries}
\label{sec:preliminaries}
% !TEX root = ../acl_lualatex.tex
\subsection{Hypergraph}

A hypergraph is defined as $\mathcal{H} = (\mathcal{V}, \mathcal{E})$, where $\mathcal{V} = \{v_1, \ldots, v_N\}$ denotes the set of nodes, and $\mathcal{E} = \{e_1, \ldots, e_M\}$ denotes the set of hyperedges, each connecting a subset of nodes in $\mathcal{V}$. 

The structure of the hypergraph is represented by an incidence matrix $H \in \mathbb{R}^{N \times M}$, where $H_{nm}$ indicates the membership of node $v_n$ in hyperedge $e_m$:
\begin{equation}
H_{nm} =
\begin{cases}
1, & \text{if } v_n \in e_m, \\
0, & \text{otherwise.}
\end{cases}
\end{equation}

\subsection{Problem Formulation}

Let $\mathcal{Z} = \{Z_k\}_{k=1}^K$ be a set of edge devices, where each device $Z_k$ holds a local corpus $\mathcal{T}_k$ and a collection of user queries $\mathcal{Q}_k = \{q_{kl}\}_{l=1}^{N_k}$. We aim to build a federated RAG system $\mathcal{M}$ that enables each device to produce accurate answers without exposing its local data or queries. Specifically, the answer to query $q_{kl}$ is generated as:
\begin{equation}
    a_{kl} = \mathcal{M}_k\!\left(q_{kl};\, \mathcal{T}_k,\, \mathcal{G}\right),
    \quad \mathcal{G} = \bigcup_{k=1}^{K} \Gamma_k,
\end{equation}
where $\Gamma_k$ is an anonymized, shareable knowledge summary distilled from $\mathcal{T}_k$, and $\mathcal{G}$ aggregates cross-device knowledge without centralizing raw data.

We seek Pareto-optimal solutions that maximize answer accuracy $\mathcal{U} = \sum_{k,l} \mathrm{Acc}(a_{kl})$ and minimize end-to-end latency $\mathcal{L}$ under knowledge-sharing constraints:
\begin{equation}
\begin{aligned}
    \max_{\mathcal{M},\,\{\Gamma_k\}} \quad
    & \bigl[\mathcal{U},\; {-\mathcal{L}(\mathcal{M})}\bigr]^\top \\
    \mathrm{s.t.} \quad
    & \Gamma_k = \Phi(\mathcal{T}_k), \quad \forall\, k \in [K],
\end{aligned}
\end{equation}
where $\Phi(\cdot)$ denotes local knowledge distillation. This formulation captures the accuracy--latency trade-off while restricting federation to distilled knowledge. The solutions lie on the Pareto frontier~\citep{hochman1969pareto}, enabling principled trade-offs in resource-constrained edge environments.

\section{Federated Dual-System RAG}
\label{sec:method}
% !TEX root = ../acl_lualatex.tex

In this section, we present FD-RAG, which consists of three components: 
Hypergraph-based Memory Construction (Section~\ref{sec:hypergraph}), 
Dual-System Inference (Section~\ref{sec:dual-system}), and Federated Knowledge 
Aggregation (Section~\ref{sec:federated}), as illustrated in Figure~\ref{fig:overview}.
Concretely, FD-RAG first constructs a semantic hypergraph over the local corpus and 
distills it into a set of hyperedge-grounded QA memories. During inference, the 
\textit{Memorizer} resolves queries with sufficient memory coverage via direct matching, 
while the \textit{Cognizer} handles harder queries by localizing relevant hyperedges and 
invoking LLM-based reasoning over the structured evidence. Cross-device knowledge 
gaps are addressed by sharing anonymized memories under federated constraints. 
We describe each component in detail below.

\subsection{Hypergraph-based Memory Construction}
\label{sec:hypergraph}

\paragraph{Semantic-Aware Hypergraph Learning.}
To capture higher-order semantic relations across text units without 
relying on costly extraction procedures, we learn the corpus structure 
directly from dense semantic representations.

A central design consideration is representational granularity: 
sentence-level units provide precise, localized anchors suitable for fact-level memory, 
while paragraph-level units preserve broader discourse context necessary for 
multi-hop and compositional reasoning. To capture both, we segment the 
local corpus $\mathcal{T}$ into paragraph-level units 
$P = \{p_i\}_{i=1}^{I}$ and sentence-level units 
$S = \{s_{j}\}_{j=1}^{J}$, 
and encode them with a pre-trained dense encoder 
(e.g., BGE-M3~\citep{chen2024bge}), yielding embedding matrices 
$E_P \in \mathbb{R}^{I \times D}$ and $E_S \in \mathbb{R}^{J \times D}$, 
where $D$ denotes the embedding dimension.

For each granularity $t \in \{p, s\}$, let 
$X^t \in \mathbb{R}^{N_t \times D}$ denote the node embedding matrix, 
where $X^p = E_P$ and $X^s = E_S$. 
We introduce $M^t$ learnable hyperedges and optimize a soft incidence 
matrix $\hat{H}^t \in [0,1]^{N_t \times M^t}$, where $\hat{H}_{nm}^t$ 
measures the association degree between node $n$ and hyperedge $m$. 
Each row is constrained to a probability simplex~\citep{boyd2004convex} to ensure interpretable membership:

\begin{equation}
    \sum_{m=1}^{M^t} \hat{H}_{nm}^t = 1, 
    \qquad \hat{H}_{nm}^t \ge 0.
\end{equation}

We then sparsify $\hat{H}^t$ by retaining only assignments above a 
threshold $\mu$, yielding a compact incidence matrix $H^t$. 
The prototype of each hyperedge $e_m^t$ is computed as the weighted mean 
of its incident node embeddings:
\begin{equation}
    e_m^t 
    = \frac{\sum_{n=1}^{N_t} H_{nm}^t\, x_n^t}
           {\sum_{n=1}^{N_t} H_{nm}^t} 
    \in \mathbb{R}^{D},
\end{equation}
where $x_n^t$ denotes the embedding of node $n$ at granularity $t$.

We optimize the hyperedge assignments via two complementary 
objectives~\citep{shang2024ada}. 
The intra-hyperedge term enforces semantic compactness within each 
hyperedge:
\begin{equation}
    L_{\text{intra}}^t
    =
    \frac{1}{M^t}
    \sum_{m=1}^{M^t}
    \frac{1}{|\mathcal{N}(e_m^t)|}
    \sum_{x_n^t \in \mathcal{N}(e_m^t)}
    \|x_n^t - e_m^t\|_2,
\end{equation}
where $\mathcal{N}(e_m^t)$ denotes the node set incident to $e_m^t$. 
The inter-hyperedge term regulates the global geometry of hyperedge 
prototypes: rather than enforcing indiscriminate separation, it attracts 
semantically similar hyperedges while repelling dissimilar ones:
\begin{equation}
% \resizebox{0.3\textwidth}{!}{$
    \begin{aligned}
    L_{\text{inter}}^t
    &=
    \frac{1}{(M^t)^2}
    \sum_{i=1}^{M^t}
    \sum_{j=1}^{M^t}
    \Bigl(\rho_{ij} \lVert e_i^t \!-\! e_j^t \rVert_2 \\
    &\qquad
        +
        (1-\rho_{ij})
        \max\bigl(\gamma - \lVert e_i^t - e_j^t \rVert_2,\, 0\bigr)
    \Bigr),
    \end{aligned}
\end{equation}
where $\rho_{ij}$ denotes the cosine similarity~\citep{salton1989automatic} between hyperedge 
prototypes $e_i^t$ and $e_j^t$, and $\gamma$ is a margin hyperparameter. 
The overall objective balances local compactness and global discrimination:
\begin{equation}
    L_{\text{total}} 
    = 
    \sum_{t \in \{p,s\}}
    \left(
        \lambda\, L_{\text{intra}}^t + (1-\lambda)\, L_{\text{inter}}^t
    \right).
\end{equation}
Here, $\lambda \in [0,1]$ balances the intra-hyperedge and inter-hyperedge terms.

After optimization, paragraph-level hyperedges $\mathcal{E}^p$ and 
sentence-level hyperedges $\mathcal{E}^s$ jointly form the overall 
hyperedge set $\mathcal{E} = \mathcal{E}^p \cup \mathcal{E}^s$ with 
$M = |\mathcal{E}|$. Each hyperedge $e_m \in \mathcal{E}$ is associated with 
a set of text units $\mathcal{C}(e_m)$ from the original corpus, 
enabling the downstream inference framework to retrieve the corresponding textual evidence.

\begin{proposition}\label{prop:main-convergence}\textit{(Stationarity of Hyperedge Assignment Learning.)}
Under $L$-smoothness of $L_{\textup{total}}$, the simplex-constrained
optimization of the soft incidence matrix $\hat{H}^t$ attains an
$\epsilon$-stationary point in $\mathcal{O}(1/\epsilon^2)$ iterations. The proof,
tailored to the probability-simplex constraints of our hyperedge assignment
formulation, is provided in Appendix~\ref{appendix:convergence-proof}.
\end{proposition}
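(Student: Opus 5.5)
We will prove the result with the standard projected-gradient analysis for smooth nonconvex minimization over a compact convex set. The set here is the product of row simplices $\mathcal{C}^t=\{\hat H^t\in\mathbb{R}^{N_t\times M^t}:\hat H^t\ge 0,\ \hat H^t\mathbf{1}=\mathbf{1}\}$. The feasible set $\mathcal{C}=\mathcal{C}^p\times\mathcal{C}^s$ is nonempty, closed and convex, and Euclidean projection onto it decomposes row by row into simplex projections. We treat $L_{\text{total}}$ as a function of $\hat H=(\hat H^p,\hat H^s)$, with the sparsification and prototype maps folded into it. We take the stated $L$-smoothness as given on a neighborhood of $\mathcal{C}$.

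The iteration is $\hat H_{k+1}=\Pi_{\mathcal{C}}(\hat H_k-\eta\nabla L_{\text{total}}(\hat H_k))$ with step $\eta=1/L$. Stationarity is measured by the gradient mapping
\begin{equation*}
G_\eta(\hat H)=\tfrac{1}{\eta}\bigl(\hat H-\Pi_{\mathcal{C}}(\hat H-\eta\nabla L_{\text{total}}(\hat H))\bigr).
\end{equation*}
A point is $\epsilon$-stationary if $\|G_\eta(\hat H)\|_F\le\epsilon$. This is the natural constrained notion: $G_\eta=0$ if and only if the first-order KKT conditions for the simplex constraints hold.

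\textbf{Step 1 (boundedness).} Since $\mathcal{C}$ is compact and $L_{\text{total}}$ is continuous, $L^\star=\min_{\mathcal{C}}L_{\text{total}}>-\infty$. In fact $L^\star\ge 0$, since every term in $L_{\text{intra}}$ and $L_{\text{inter}}$ is nonnegative: norms, hinge terms, and, we note, $\rho_{ij}\|\cdot\|$. The last needs $\rho_{ij}\ge0$ or else a bounded-below argument using compactness, and compactness suffices.

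\textbf{Step 2 (sufficient decrease).} The descent lemma gives $L(y)\le L(x)+\langle\nabla L(x),y-x\rangle+\frac{L}{2}\|y-x\|^2$. Apply it with $y=\hat H_{k+1}$. Combine this with the projection characterization $\langle \hat H_k-\eta\nabla L(\hat H_k)-\hat H_{k+1},\,z-\hat H_{k+1}\rangle\le0$ for all $z\in\mathcal{C}$, taking $z=\hat H_k$. Together these give
\begin{equation*}
L_{\text{total}}(\hat H_{k+1})\le L_{\text{total}}(\hat H_k)-\tfrac{1}{2L}\|G_{1/L}(\hat H_k)\|_F^2.
\end{equation*}

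\textbf{Step 3 (telescoping).} Summing over $k=0,\dots,T-1$ gives
\begin{equation*}
\min_{k<T}\|G_{1/L}(\hat H_k)\|_F^2\le \frac{2L\,(L_{\text{total}}(\hat H_0)-L^\star)}{T}.
\end{equation*}
Hence $T\ge 2L(L_{\text{total}}(\hat H_0)-L^\star)/\epsilon^2=\mathcal{O}(1/\epsilon^2)$ iterations suffice.

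The main obstacle is justifying the smoothness hypothesis rather than the rate argument itself. $L_{\text{total}}$ involves $\|\cdot\|_2$ (nonsmooth at $0$), hinge terms $\max(\gamma-\cdot,0)$, the threshold sparsification at $\mu$, and prototypes given by ratios whose denominators $\sum_n H_{nm}$ can vanish. Because the proposition assumes $L$-smoothness, we will state explicitly that the analysis applies to the differentiable surrogate actually optimized. This means computing prototypes from $\hat H$ rather than the thresholded $H$, and, if needed, replacing $\|\cdot\|$ by $\sqrt{\|\cdot\|^2+\delta}$ and the hinge by a softplus. We will also restrict to a region where hyperedge masses are bounded below by a constant (for example via a small uniform mixing term), so that the ratio maps are Lipschitz-smooth on $\mathcal{C}$. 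We will remark that the constant in $\mathcal{O}(\cdot)$ depends on $L$ and the initial gap, but not on $N_t$ or $M^t$ beyond their effect through $L$.
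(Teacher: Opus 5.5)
Your argument is correct and follows the paper's proof: projected gradient descent with $\eta=1/L$, the descent lemma combined with the projection inequality at $z=\hat H_k$ to get the $\frac{1}{2L}\|G_{1/L}(\hat H_k)\|_F^2$ decrease, then telescoping to $T\ge 2L\Delta/\epsilon^2$. Your row-wise product-of-simplices setup and compactness argument for $L^\star>-\infty$ fit the actual formulation better than the paper's single simplex $\Delta^n$ with an assumed minimizer. Your smoothness caveats (norms at zero, hinge terms, thresholding, vanishing prototype denominators) identify what the paper simply assumes.
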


\paragraph{QA Memory Construction.}
To expose the hypergraph as an efficient memory interface, 
we convert each hyperedge into a set of retrieval-oriented QA memories. 
For each hyperedge $e_m \in \mathcal{E}$, we first derive a  
typed fact set from its associated context $\mathcal{C}(e_m)$ using a lightweight
\textsc{spaCy}-based extractor~\citep{honnibal2020spacy}, 
providing a comprehensive factual basis for subsequent QA generation:
\begin{equation}
\label{eq:factset}
    \mathcal{F}_m 
    = \langle (u_1, v_1),\, (u_2, v_2),\, \ldots,\, 
               (u_{T_m}, v_{T_m}) \rangle,
\end{equation}
where $u_t$ denotes a textual fact span and $v_t$ its semantic type. 
This design delegates raw-corpus fact extraction to an efficient traditional 
NLP pipeline, preserving localized evidence while reducing the burden on the 
language model. Conditioned on $\mathcal{F}_m$ and $\mathcal{C}(e_m)$, we then 
use a local language model to synthesize hyperedge-grounded 
QA memory items (see Appendix~\ref{appendix:prompt} for the prompt template). This keeps the offline memory construction stage lightweight 
and well suited to edge deployment. Formally,
\begin{equation}
    \Gamma 
    = \bigcup_{m=1}^{M} 
      \left\{ \gamma_m^r = (q_m^r,\, a_m^r,\, \mathcal{S}_m^r) 
      \right\}_{r=1}^{R_m},
\end{equation}
where $q_m^r$ and $a_m^r$ denote the $r$-th question--answer pair 
grounded in hyperedge $e_m$, and 
$\mathcal{S}_m^r \subseteq \mathcal{E}$ is the supporting hyperedge set, 
with $e_m \in \mathcal{S}_m^r$ required.
Additional hyperedges are included when the question involves 
cross-hyperedge composition, naturally accommodating both fact-level 
memories ($|\mathcal{S}_m^r| = 1$) and multi-hop memories 
($|\mathcal{S}_m^r| > 1$).

The resulting memory $\Gamma$ serves as a structured access layer 
over the underlying semantic hypergraph rather 
than an independent synthetic QA pool. 
Each memory item maintains explicit links to its supporting hyperedges, 
preserving traceability to the original graph structure. 
During downstream inference, the model can answer directly from memory 
when the stored information is sufficient, or retrieve structured evidence 
from the hypergraph when more elaborate reasoning is required.

\subsection{Dual-System Inference Framework}
\label{sec:dual-system}

Queries exhibit heterogeneous reasoning demands: 
well-covered queries can be answered via direct memory matching, 
while complex ones require structured retrieval and LLM-based reasoning. 
Uniform LLM usage incurs unnecessary latency, 
whereas LLM-free methods sacrifice accuracy. 
To balance this trade-off, we propose a dual-system inference framework 
that routes queries based on their coverage under the QA memory $\Gamma$.

\paragraph{Unified Matching Score.}
To quantify memory coverage, we score each candidate
$\gamma_r = (q_r, a_r, \mathcal{S}_r) \in \Gamma$
against the user query $q$ by combining two complementary signals.
Dense semantic similarity alone may miss structural alignment
on entity-centric queries, while structural overlap alone is brittle
when surface forms vary.
Their combination yields a robust, low-cost coverage estimate:
\begin{equation}
\begin{aligned}
\operatorname{Score}(q,\gamma_r)
&= \alpha\, \operatorname{Sim}\!\bigl(f(q), f(q_r)\bigr) \\
&\quad + (1-\alpha)\, \operatorname{Cover}(q,\mathcal{S}_r)
\end{aligned}
\label{eq:score}
\end{equation}
where $f(\cdot)$ is a dense text encoder, $\alpha\!\in\![0,1]$
balances the two signals, and
\begin{equation}
\operatorname{Cover}(q,\mathcal{S}_r)
=
\frac{2\lvert\mathcal{A}_q \cap \mathcal{A}_r\rvert}
     {\lvert\mathcal{A}_q\rvert + \lvert\mathcal{A}_r\rvert}
\label{eq:cover}
\end{equation}
measures the Dice overlap between the query anchor set
$\mathcal{A}_q$ (salient named entities, noun phrases,
and typed concept spans extracted from $q$) and the support anchor set
$\mathcal{A}_r = \bigcup_{e \in \mathcal{S}_r}\mathcal{A}(e)$
derived from the typed facts of each supporting hyperedge.
 
\paragraph{Memorizer: Memory-Triggered Fast Thinking.}
For queries that are well-covered by memory, repeated LLM invocation
is wasteful: the answer is already latent in the stored QA pairs.
The \textit{Memorizer} therefore identifies the best-matched item
\begin{equation}
r^{*} = \arg\max_{r}\operatorname{Score}(q,\gamma_r),
\label{eq:argmax}
\end{equation}
and directly returns $a_{r^{*}}$ whenever
$\operatorname{Score}(q,\gamma_{r^{*}}) \ge \delta$,
bypassing LLM inference entirely.
This fast path eliminates the dominant source of latency
for the majority of queries while preserving answer fidelity,
as the retrieved answer is grounded in hyperedge-verified evidence
from the construction phase.
 
\paragraph{Cognizer: Hyperedge-Grounded Slow Thinking.}
When $\operatorname{Score}(q,\gamma_{r^{*}}) < \delta$,
direct memory matching is insufficient as the query may
require compositional reasoning across multiple evidence pieces
that no single memory item can cover.
Rather than falling back to full-corpus retrieval,
the \textit{Cognizer} reuses the memory layer as a
\emph{localization interface}:
the top-$K$ items $\mathcal{R}_q = \operatorname{TopK}(q,\Gamma)$
are retrieved, and their supporting hyperedges are aggregated as
\begin{equation}
\mathcal{E}_q = \bigcup_{\gamma_r \in \mathcal{R}_q} \mathcal{S}_r.
\end{equation}
For each $e \in \mathcal{E}_q$, the source context
$\mathcal{C}(e)$ and typed fact set $\mathcal{F}(e)$ are assembled
into structured evidence units
$z_e = \bigl(\mathcal{C}(e), \mathcal{F}(e)\bigr)$,
over which the LLM reasons using the prompt in
Appendix~\ref{appendix:prompt}:
\begin{equation}
\hat{a} = \mathrm{LLM}\bigl(q, \mathcal{Z}_q\bigr), \quad
\mathcal{Z}_q = \{ z_e \mid e \in \mathcal{E}_q \}.
\end{equation}
This design is critical for edge deployment:
by confining LLM reasoning to a small, hyperedge-selected evidence set
rather than the full corpus,
the slow path achieves targeted inference
without sacrificing compositional reasoning capability.

\subsection{Federated Knowledge Aggregation}
\label{sec:federated}
 
Local corpora on individual devices are inherently incomplete, and
evidence for a query may be distributed across devices rather than
available to any single one. To mitigate this knowledge fragmentation,
FD-RAG performs federation at the memory level: each device shares
QA memories distilled from its local hypergraph, while raw
corpora and full hypergraph structures remain on device.
 
\paragraph{Local Memory Export.}
Each device $k$ constructs a local semantic hypergraph $\mathcal{H}_k$
and QA memory $\Gamma_k$ from its corpus $\mathcal{T}_k$
(Section~\ref{sec:hypergraph}). For federation, the device uploads a
shareable memory view derived from $\Gamma_k$. In privacy-sensitive
settings, sensitive entities in the typed facts (Eq.~\ref{eq:factset})
are perturbed via randomized response~\citep{warner1965randomized} to
satisfy $\epsilon$-LDP~\citep{dwork2008differential}. To preserve
semantic utility, each entity $e$ is replaced with a surrogate sampled
from a candidate set $W$ ($|W|=c$) containing $e$ and $c-1$
semantically similar alternatives.

\begin{proposition}\label{prop:main-ldp}\textit{($\epsilon$-LDP Perturbation Mechanism.)}
Given a sensitive entity $e$ with candidate set
$W = \{e, w_1, \ldots, w_{c-1}\}$ of semantically similar alternatives,
the perturbation mechanism
\begin{equation*}
\Pr[e' \mid e] =
\begin{cases}
\dfrac{e^{\epsilon}}{e^{\epsilon} + c - 1}, & e' = e, \\[6pt]
\dfrac{1}{e^{\epsilon} + c - 1}, & e' \in W,\ e' \neq e,
\end{cases}
\end{equation*}
satisfies $\epsilon$-local differential privacy.
\textit{Proof in Appendix~\ref{appendix:ldp-proof}.}
\end{proposition}

This mechanism preserves utility while obfuscating device-specific identifiers. 
The anonymized memory is defined as

\begin{equation}
\widetilde{\Gamma}_k = \operatorname{Anonymize}(\Gamma_k).
\end{equation}

\paragraph{Global Memory Fusion.}
Given uploads $\{\widetilde{\Gamma}_k\}_{k=1}^{K}$, the server aggregates
them into a global memory bank:
\begin{equation}
\Gamma^{g} = \bigcup_{k=1}^{K} \widetilde{\Gamma}_k.
\end{equation}
Because different devices often contain complementary evidence,
$\Gamma^{g}$ extends the coverage of any single device. It improves the
\textit{Memorizer} by increasing the chance of high-confidence
matches, and assists the \textit{Cognizer} by providing cross-device
cues for localizing the hyperedges most likely to contain the required
evidence. Federation therefore serves primarily as knowledge-level
collaboration to reduce fragmentation, while privacy protection remains
an enabling safeguard for sensitive deployments.

\section{Experiments}
\label{sec:experiments}
% !TEX root = ../acl_lualatex.tex
\label{experiment}
We evaluate FD-RAG from three complementary perspectives.
\textbf{RQ1:} How does FD-RAG compare with representative baselines in overall answer quality and efficiency under both local and federated settings?
\textbf{RQ2:} How does the proposed Dual-System inference mechanism balance the Memorizer and the Cognizer, and does it provide a better accuracy--efficiency trade-off than always using either path alone?
\textbf{RQ3:} Are the main design choices in FD-RAG all necessary, and how does removing each component affect answer quality and latency?

\begin{table*}[t]
\centering
\footnotesize
\caption{Main results on three benchmarks under both local and federated settings. We report F1, accuracy (ACC), latency (Lat), and average LLM calls (Calls) per query. Best results in each column are shown in \textbf{bold}.}
\label{tab:main}
\setlength{\tabcolsep}{3.2pt}
\renewcommand{\arraystretch}{1.12}
\begin{tabular*}{\textwidth}{@{\extracolsep{\fill}}lcccccccccccc@{}}
\toprule
\multirow{2}{*}{\textbf{Method}} &
\multicolumn{4}{c}{\textbf{HotPotQA}} &
\multicolumn{4}{c}{\textbf{2WikiMQA}} &
\multicolumn{4}{c}{\textbf{MuSiQue}} \\
\cmidrule(lr){2-5} \cmidrule(lr){6-9} \cmidrule(lr){10-13}
& \textbf{F1} & \textbf{ACC} & \textbf{Lat} & \textbf{Calls}
& \textbf{F1} & \textbf{ACC} & \textbf{Lat} & \textbf{Calls}
& \textbf{F1} & \textbf{ACC} & \textbf{Lat} & \textbf{Calls} \\
\midrule

\rowcolor{black!6}
\multicolumn{13}{c}{\textbf{Local Setting}} \\

Vanilla RAG~\citep{lewis2020retrieval}
& 44.3 & 49.6 & 2.4s & 1.00
& 43.5 & 45.2 & 2.6s & 1.00
& 11.0 & 9.2 & 1.5s & 1.00 \\

IterDRAG~\citep{yueinference}
& 47.4 & 44.4 & 5.2s & 2.47
& 38.8 & 43.8 & 6.37s & 3.02
& 17.5 & 12.2 & 4.3s & 2.13 \\

LongRAG~\citep{jiang2024longrag}
& 57.3 & 53.0 & 10.4s & 6.23
& 58.1 & 54.0 & 12.3s & 7.35
& 35.5 & 31.0 & 9.2s & 5.48 \\

EfficientRAG~\citep{zhuang2024efficientrag}
& 52.1 & 50.0 & 3.3s & 2.65
& 46.3 & 44.2 & 3.9s & 3.22
& 18.3 & 17.0 & 3.6s & 2.86 \\

RAPTOR~\citep{sarthi2024raptor}
& 63.5 & 59.0 & 4.5s & 1.00
& 61.2 & 50.6 & 5.9s & 1.00
& 36.8 & 30.2 & 5.6s & 1.00 \\

GraphRAG~\citep{edge2024local}
& 55.1 & 50.8 & 6.9s & 1.00
& 58.9 & 52.7 & 7.4s & 1.00
& 34.2 & 29.8 & 7.1s & 1.00 \\

HyperGraphRAG~\citep{luo2025hypergraphrag}
& 60.7 & 58.4 & 4.2s & 1.00
& 61.8 & 57.1 & 4.9s & 1.00
& 37.4 & 34.8 & 4.1s & 1.00 \\

HippoRAG 2~\citep{gutierrez2025from}
& 63.2 & 60.4 & 3.8s & 1.00
& \textbf{63.5} & 57.9 & 3.9s & 1.00
& \textbf{40.3} & 36.2 & 3.4s & 1.00 \\

\textbf{FD-RAG-Local (Ours)}
& \textbf{73.4} & \textbf{68.2} & \textbf{0.45s} & \textbf{0.32}
& 63.1 & \textbf{59.2} & \textbf{0.62s} & \textbf{0.45}
& 39.1 & \textbf{38.0} & \textbf{0.54s} & \textbf{0.52} \\

\midrule
\rowcolor{black!6}
\multicolumn{13}{c}{\textbf{Federated Setting}} \\

\textsc{Local-RAG}
& 39.2 & 44.5 & 2.0s & 1.00
& 41.3 & 43.3 & 2.2s & 1.00
& 9.7 & 7.9 & 1.3s & 1.00 \\

C-FedRAG~\citep{xu2024cfedrag}
& 49.2 & 46.1 & 4.9s & 2.84
& 50.6 & 47.8 & 5.3s & 2.97
& 24.9 & 22.0 & 4.2s & 2.73 \\

RAGRoute~\citep{guerraoui2025ragroute}
& 53.4 & 50.9 & 3.0s & 1.31
& 55.9 & 52.8 & 3.2s & 1.38
& 28.7 & 26.0 & 2.7s & 1.24 \\

FD-RAG (w/o fusion)
& 65.0 & 61.2 & 1.42s & 0.80
& 60.3 & 56.7 & 1.58s & 0.86
& 35.6 & 32.8 & 1.46s & 0.82 \\

\textbf{FD-RAG (Ours)}
& \textbf{68.9} & \textbf{64.5} & \textbf{1.14s} & \textbf{0.62}
& \textbf{62.6} & \textbf{58.9} & \textbf{1.28s} & \textbf{0.68}
& \textbf{38.3} & \textbf{35.7} & \textbf{1.21s} & \textbf{0.65} \\

\bottomrule
\end{tabular*}
\end{table*}

\subsection{Experimental Setup}

\begin{table}[t]
\centering
\scriptsize
\caption{Dual-system inference decomposition on HotPotQA. Fast-path coverage is the fraction of queries resolved by the Memorizer; fast/slow ACC are conditional accuracies on each subset; Oracle is an upper bound only.}
\label{tab:dual}
\setlength{\tabcolsep}{0pt}
\renewcommand{\arraystretch}{1.1}
\begin{tabular*}{\columnwidth}{@{\extracolsep{\fill}}lccccc@{}}
\toprule
\textbf{Method} & \textbf{\shortstack{Fast\\Cover.}} & \textbf{\shortstack{Fast\\ACC}} & \textbf{\shortstack{Slow\\ACC}} & \textbf{\shortstack{Overall\\ACC}} & \textbf{\shortstack{Avg.\\Lat.}} \\
\midrule
FD-RAG (full)  & 68.0\% & 77.7 & 48.0 & 68.2 & 0.45s \\
Mem.-only      & 100.0\% & 38.2 & --- & 38.2 & 0.16s \\
Cog.-only      & 0.0\% & --- & 59.3 & 59.3 & 2.13s \\
Oracle (upper) & 80.0\% & --- & --- & 75.0 & 0.55s \\
\bottomrule
\end{tabular*}
\end{table}

\paragraph{Benchmarks.}
We follow the benchmark selection used in~\citet{jiang2024longrag}, evaluating our method on HotPotQA~\citep{yang2018hotpotqa}, 2WikiMQA~\citep{ho2020constructing}, and MuSiQue~\citep{trivedi2022musique}. For consistency in evaluation protocol, we adopt the experimental settings introduced in LongBench~\citep{bai2024longbench}. 
\paragraph{Dataset Construction.}
We evaluate all methods under two settings.
(1) Local setting: Each query is assigned to a home client whose local document store contains all of its gold supporting documents $D^{+}(q)$, so that no cross-silo retrieval is required to answer it. Since conventional RAG baselines are not designed with a decentralized protocol, evaluating them under the federated setting would introduce an inherent architectural disadvantage; we therefore restrict their evaluation to this setting to ensure a fair and controlled comparison.
(2) Federated setting: Following prior work~\citep{wang2024feb4rag}, we partition the corpus into disjoint subsets assigned to different clients, simulating a multi-silo environment. For each query $q$, we define it as \emph{local} if all documents in $D^{+}(q)$ reside on its home client, and \emph{cross-silo} otherwise. This protocol preserves the original QA pairs while introducing controlled distribution of supporting evidence across clients. Unless otherwise specified, all methods share the same document partition and query assignment on each dataset. Detailed dataset statistics are provided in Appendix~\ref{appendix:experiment-details}.

\begin{figure}[t]
\centering
\makebox[\columnwidth][c]{\includegraphics[width=0.90\columnwidth]{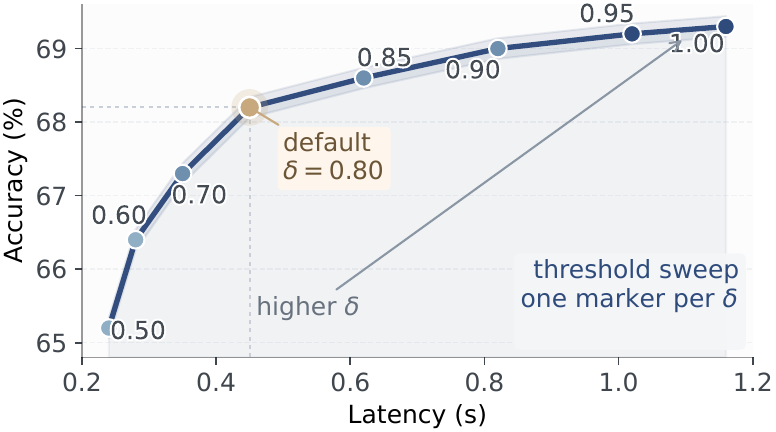}}
\caption{Pareto frontier of accuracy and latency on HotPotQA under varying confidence threshold $\delta$. Each marker corresponds to one threshold setting, and the highlighted marker denotes the default $\delta=0.8$.}
\label{fig:acc_latency_tradeoff}
\end{figure}

\begin{figure*}[!t]
\centering
\includegraphics[width=0.98\textwidth]{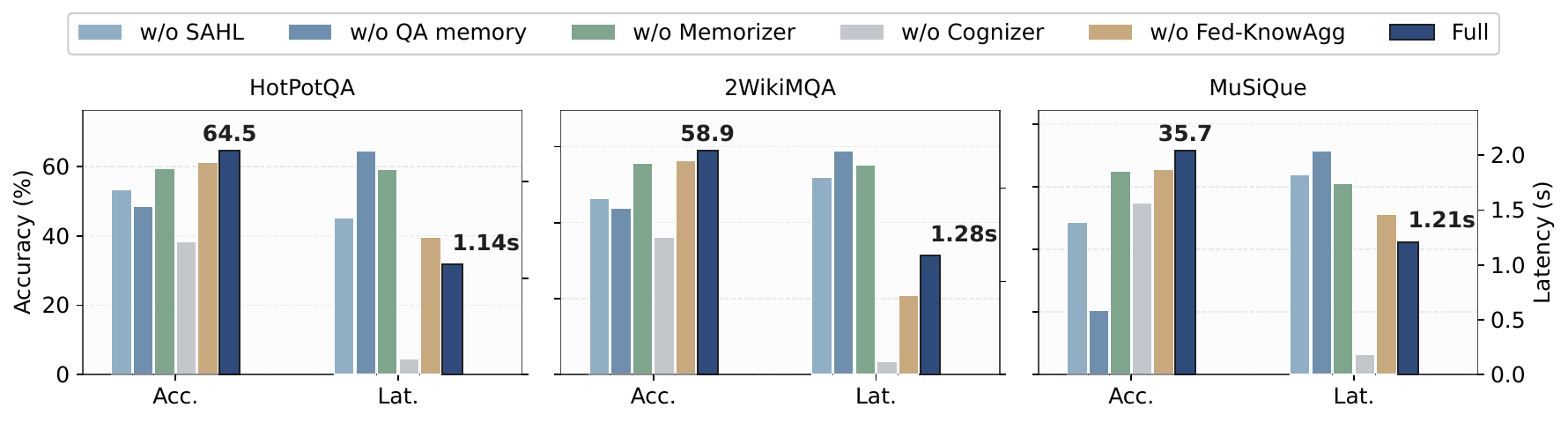}
\caption{Ablation study on HotPotQA, 2WikiMQA, and MuSiQue. Each panel reports accuracy (Acc.) and latency (Lat.) for the full model and five ablated variants.}
\label{fig:fd_rag_ablation}
\end{figure*}

\paragraph{Evaluation Metrics.}
We report accuracy (ACC), F1, and average end-to-end latency per query as primary metrics. Latency is measured on device from query receipt to final answer generation and averaged over the evaluation set. We further report the average number of LLM calls per query to characterize inference cost. In federated settings, metrics are computed per client and averaged across clients.
\paragraph{Baselines.}
We compare FD-RAG against three groups of baselines.
(1) \textbf{RAG baselines}: Vanilla RAG~\citep{lewis2020retrieval}, LongRAG~\citep{jiang2024longrag}, IterDRAG~\citep{yueinference}, EfficientRAG~\citep{zhuang2024efficientrag}, RAPTOR~\citep{sarthi2024raptor}, GraphRAG~\citep{edge2024local}, HippoRAG 2~\citep{gutierrez2025from}, and HyperGraphRAG~\citep{luo2025hypergraphrag}.
(2) \textbf{Baselines under the federated setting}: \textsc{Local-RAG}: a non-collaborative federated baseline where each client performs RAG using only its own local data, RAGRoute~\citep{guerraoui2025ragroute} and C-FedRAG~\citep{xu2024cfedrag}.
(3) \textbf{FD-RAG variants}: \textsc{FD-RAG-Local}, the single-device version without federated aggregation, and \textsc{FD-RAG (w/o fusion)}, which keeps the federated deployment protocol but removes the global memory fusion stage.
\paragraph{Implementation Details.}
We use $\mu=0.5$, $K=5$, $\lambda=0.6$, and $\delta=0.8$ across all datasets and clients. For privacy-preserving memory sharing, we set the local differential privacy budget to $\epsilon=1.0$ with candidate set size $c=5$. The semantic-aware hypergraph objective is optimized for 300 steps with a learning rate of 0.05, using a fixed hyperparameter configuration throughout. 
We adopt BGE-M3~\citep{chen2024bge} as the dense encoder and \texttt{Llama-3.1-8B}~\citep{grattafiori2024llama} (INT4 quantized) as the language model. All federated experiments use five clients. FD-RAG is deployed on an NVIDIA Jetson Orin Nano 8GB~\citep{nvidia_jetson_orin}, with end-to-end latency measured on device using batch size 1. Additional deployment details and fairness controls are provided in Appendix~\ref{app:implementation}.

\subsection{Main Results (RQ1)}

Table~\ref{tab:main} shows that \textsc{FD-RAG} consistently achieves a stronger 
accuracy--efficiency frontier than both conventional RAG baselines and federated competitors. 
In the local setting, \textsc{FD-RAG-Local} achieves the best ACC on all benchmarks 
and attains 68.2 on HotPotQA, outperforming HippoRAG~2 by 7.8\% ACC while reducing latency by 8.4$\times$. 
The gain is not merely due to using structured retrieval, since \textsc{FD-RAG-Local} also surpasses 
GraphRAG and HyperGraphRAG while keeping the average number of LLM calls below 0.6 across datasets. 
This suggests that the benefit comes from exposing the learned structure as 
QA memory and invoking LLM reasoning only when memory coverage is insufficient. 
In the federated setting, the drop of \textsc{Local-RAG} to 44.5 ACC on HotPotQA 
highlights the severity of cross-silo knowledge fragmentation. \textsc{FD-RAG} substantially 
recovers this gap, outperforming RAGRoute by 13.6 ACC points while remaining 2.6$\times$ faster, 
and it consistently improves over \textsc{FD-RAG}~(w/o fusion) across all benchmarks. 
Together, these results indicate that memory-level federation effectively 
expands coverage for distributed evidence while 
preserving the lightweight inference profile of the local model.

\subsection{Dual-System Inference Analysis (RQ2)}

Table~\ref{tab:dual} provides a direct view of how the two inference paths divide labor. 
At the default threshold $\delta=0.8$, the Memorizer resolves 68.0\% of queries with 77.7 conditional ACC, 
so most inputs can terminate on the fast path without triggering full LLM reasoning. 
At the same time, the large gap between the full model and \textit{Mem.-only} (68.2 vs.\ 38.2 ACC) 
shows that direct memory matching alone cannot absorb the compositional burden of multi-hop QA. 
The comparison with \textit{Cog.-only} is equally revealing: 
although it invokes the LLM on every query, it still trails the full system by 8.9 points 
and incurs nearly $5\times$ higher latency. This indicates that QA memory contributes 
more than routing. The top-$K$ retrieved items act as structured hyperedge pointers 
that narrow the evidence space before LLM inference, 
improving slow-path reasoning quality in addition to reducing cost. 
Figure~\ref{fig:acc_latency_tradeoff} further shows a smooth Pareto frontier as $\delta$ varies, 
with the default setting lying near the knee of the curve. 
The routing gate therefore functions as a controlled mechanism for balancing accuracy and latency, 
rather than a brittle heuristic.

\subsection{Ablation Study (RQ3)}

Figure~\ref{fig:fd_rag_ablation} shows that the gain of \textsc{FD-RAG} is distributed across representation, 
inference, and federation modules rather than dominated by a single design choice. 
Removing semantic-aware hypergraph learning (\textit{w/o SAHL}) 
consistently lowers accuracy and increases latency, indicating that the learned hypergraph is 
not an auxiliary construction step but the structural basis for precise memory formation and evidence 
localization. Removing QA memory (\textit{w/o QA memory}) is even more damaging on both axes, 
since QA memory is the interface shared by the Memorizer and the Cognizer: once it is removed, 
the model loses both efficient direct matching and focused grounding for the slow path. 
The two inference ablations reveal complementary failure modes. 
\textit{w/o Memorizer} mainly hurts efficiency by forcing all queries through LLM-based reasoning, 
whereas \textit{w/o Cognizer} causes the sharpest accuracy collapse, 
showing that memory matching alone cannot support compositional reasoning. Finally, the drop of \textit{w/o Fed-KnowAgg} 
confirms that global memory fusion remains important when supporting evidence is distributed across clients. 
Additional experimental results are provided in Appendix~\ref{additional-result}.

\section{Conclusion}
\label{sec:conclusion}

We presented FD-RAG, a federated dual-system RAG framework for edge environments. 
By organizing local corpora into semantic hypergraphs and distilling 
them into QA memories, FD-RAG enables efficient memory-based 
response for well-covered queries while reserving LLM-based reasoning 
for more complex cases. This fast--slow decoupling, 
together with federated memory aggregation, 
provides a practical way to improve knowledge utilization 
under data locality and resource constraints.
Empirical results validate the effectiveness of FD-RAG in balancing accuracy and 
efficiency, while the proposed hypergraph learning 
objective admits convergence guarantees. 
We hope FD-RAG offers a principled foundation for RAG deployment in real-world edge scenarios.

\section*{Limitations}
\label{sec:limitations}

FD-RAG demonstrates strong performance in edge-device application scenarios; however, its adaptation to entirely new domains or tasks still relies heavily on an offline construction and optimization process. This limitation arises because the system's knowledge base, memory organization, and reasoning patterns are built from existing data during offline preparation. As a result, when a new domain introduces unfamiliar concepts, terminologies, or relational structures, the system typically requires offline rebuilding, retraining, or re-indexing before it can achieve strong performance, rather than adapting immediately during deployment. While this design helps keep online inference efficient, it also limits the speed and flexibility of domain transfer. Future work will therefore focus on developing more incremental and adaptive mechanisms to reduce the cost of offline reconstruction. In particular, transfer learning and lightweight continual updating may help transfer accumulated knowledge from known domains to new ones more efficiently, thereby improving adaptation speed and system robustness.

\bibliography{custom}

@article{lewis2020retrieval,
  title={Retrieval-augmented generation for knowledge-intensive nlp tasks},
  author={Lewis, Patrick and Perez, Ethan and Piktus, Aleksandra and Petroni, Fabio and Karpukhin, Vladimir and Goyal, Naman and K{\"u}ttler, Heinrich and Lewis, Mike and Yih, Wen-tau and Rockt{\"a}schel, Tim and others},
  journal={Advances in neural information processing systems},
  volume={33},
  pages={9459--9474},
  year={2020}
}

@article{gao2023retrieval,
  title={Retrieval-augmented generation for large language models: A survey},
  author={Gao, Yunfan and Xiong, Yun and Gao, Xinyu and Jia, Kangxiang and Pan, Jinliu and Bi, Yuxi and Dai, Yixin and Sun, Jiawei and Wang, Haofen and Wang, Haofen},
  journal={arXiv preprint arXiv:2312.10997},
  volume={2},
  pages={1},
  year={2023}
}

@article{kahneman2003maps,
  title={Maps of bounded rationality: Psychology for behavioral economics},
  author={Kahneman, Daniel},
  journal={American economic review},
  volume={93},
  number={5},
  pages={1449--1475},
  year={2003},
  publisher={American Economic Association}
}

@article{luo2025hypergraphrag,
  title={Hypergraphrag: Retrieval-augmented generation via hypergraph-structured knowledge representation},
  author={Luo, Haoran and Chen, Guanting and Zheng, Yandan and Wu, Xiaobao and Guo, Yikai and Lin, Qika and Feng, Yu and Kuang, Zemin and Song, Meina and Zhu, Yifan and others},
  journal={Advances in Neural Information Processing Systems},
  volume={38},
  pages={152206--152234},
  year={2026}
}

@inproceedings{jin2024long,
  title={Long-context llms meet rag: Overcoming challenges for long inputs in rag},
  author={Jin, Bowen and Yoon, Jinsung and Han, Jiawei and Arik, Sercan},
  booktitle={International Conference on Learning Representations},
  volume={2025},
  pages={37784--37822},
  year={2025}
}

@article{li2024long,
  title={Long Context vs. RAG for LLMs: An Evaluation and Revisits},
  author={Li, Xinze and Cao, Yixin and Ma, Yubo and Sun, Aixin},
  journal={arXiv preprint arXiv:2501.01880},
  year={2024}
}

@inproceedings{zhuang2024efficientrag,
  title={Efficientrag: Efficient retriever for multi-hop question answering},
  author={Zhuang, Ziyuan and Zhang, Zhiyang and Cheng, Sitao and Yang, Fangkai and Liu, Jia and Huang, Shujian and Lin, Qingwei and Rajmohan, Saravan and Zhang, Dongmei and Zhang, Qi},
  booktitle={Proceedings of the 2024 Conference on Empirical Methods in Natural Language Processing},
  pages={3392--3411},
  year={2024}
}

@inproceedings{liu2025hoprag,
  title={Hoprag: Multi-hop reasoning for logic-aware retrieval-augmented generation},
  author={Liu, Hao and Wang, Zhengren and Chen, Xi and Li, Zhiyu and Xiong, Feiyu and Yu, Qinhan and Zhang, Wentao},
  booktitle={Findings of the Association for Computational Linguistics: ACL 2025},
  pages={1897--1913},
  year={2025}
}

@inproceedings{asai2023self,
  title={Self-rag: Learning to retrieve, generate, and critique through self-reflection},
  author={Asai, Akari and Wu, Zeqiu and Wang, Yizhong and Sil, Avi and Hajishirzi, Hannaneh},
  booktitle={International conference on learning representations},
  volume={2024},
  pages={9112--9141},
  year={2024}
}

@inproceedings{yueinference,
  title={Inference scaling for long-context retrieval augmented generation},
  author={Yue, Zhenrui and Zhuang, Honglei and Bai, Aijun and Hui, Kai and Jagerman, Rolf and Zeng, Hansi and Qin, Zhen and Wang, Dong and Wang, Xuanhui and Bendersky, Michael},
  booktitle={International Conference on Learning Representations},
  volume={2025},
  pages={72914--72938},
  year={2025}
}

@inproceedings{yang2018hotpotqa,
  title={HotpotQA: A dataset for diverse, explainable multi-hop question answering},
  author={Yang, Zhilin and Qi, Peng and Zhang, Saizheng and Bengio, Yoshua and Cohen, William and Salakhutdinov, Ruslan and Manning, Christopher D},
  booktitle={Proceedings of the 2018 conference on empirical methods in natural language processing},
  pages={2369--2380},
  year={2018}
}

@inproceedings{ho2020constructing,
  title={Constructing a multi-hop qa dataset for comprehensive evaluation of reasoning steps},
  author={Ho, Xanh and Nguyen, Anh-Khoa Duong and Sugawara, Saku and Aizawa, Akiko},
  booktitle={Proceedings of the 28th International Conference on Computational Linguistics},
  pages={6609--6625},
  year={2020}
}

@article{trivedi2022musique,
  title={MuSiQue: Multi-hop Questions via Single-hop Question Composition},
  author={Trivedi, Harsh and Balasubramanian, Niranjan and Khot, Tushar and Sabharwal, Ashish},
  journal={Transactions of the Association for Computational Linguistics},
  volume={10},
  pages={539--554},
  year={2022}
}

@inproceedings{bai2024longbench,
  title={Longbench: A bilingual, multitask benchmark for long context understanding},
  author={Bai, Yushi and Lv, Xin and Zhang, Jiajie and Lyu, Hongchang and Tang, Jiankai and Huang, Zhidian and Du, Zhengxiao and Liu, Xiao and Zeng, Aohan and Hou, Lei and others},
  booktitle={Proceedings of the 62nd annual meeting of the association for computational linguistics (volume 1: Long papers)},
  pages={3119--3137},
  year={2024}
}

@article{jiang2024longrag,
  title={Longrag: Enhancing retrieval-augmented generation with long-context llms},
  author={Jiang, Ziyan and Ma, Xueguang and Chen, Wenhu},
  journal={arXiv preprint arXiv:2406.15319},
  year={2024}
}

@inproceedings{sarthi2024raptor,
  title={Raptor: Recursive abstractive processing for tree-organized retrieval},
  author={Sarthi, Parth and Abdullah, Salman and Tuli, Aditi and Khanna, Shubh and Goldie, Anna and Manning, Christopher},
  booktitle={International Conference on Learning Representations},
  volume={2024},
  pages={32628--32649},
  year={2024}
}

@article{grattafiori2024llama,
  title={The llama 3 herd of models},
  author={Grattafiori, Aaron and Dubey, Abhimanyu and Jauhri, Abhinav and Pandey, Abhinav and Kadian, Abhishek and Al-Dahle, Ahmad and Letman, Aiesha and Mathur, Akhil and Schelten, Alan and Vaughan, Alex and others},
  journal={arXiv preprint arXiv:2407.21783},
  year={2024}
}

@article{chen2024bge,
  title={BGE M3-Embedding: Multi-Lingual, Multi-Functionality, Multi-Granularity Text Embeddings Through Self-Knowledge Distillation},
  author={Chen, Jianlv and Xiao, Shitao and Zhang, Peitian and Luo, Kun and Lian, Defu and Liu, Zheng},
  journal={arXiv e-prints},
  pages={arXiv--2402},
  year={2024}
}

@inproceedings{li2024graphreader,
  title={GraphReader: Building Graph-based Agent to Enhance Long-Context Abilities of Large Language Models},
  author={Li, Shilong and He, Yancheng and Guo, Hangyu and Bu, Xingyuan and Bai, Ge and Liu, Jie and Liu, Jiaheng and Qu, Xingwei and Li, Yangguang and Ouyang, Wanli},
  booktitle={Findings of the Association for Computational Linguistics: EMNLP 2024},
  pages={12758--12786},
  year={2024}
}

@article{guinet2024automated,
  title={Automated evaluation of retrieval-augmented language models with task-specific exam generation},
  author={Guinet, Gauthier and Omidvar-Tehrani, Behrooz and Deoras, Anoop and Callot, Laurent},
  journal={arXiv preprint arXiv:2405.13622},
  year={2024}
}

@article{yang2024crag,
  title={CRAG--Comprehensive RAG Benchmark},
  author={Yang, Xiao and Sun, Kai and Xin, Hao and Sun, Yushi and Bhalla, Nikita and Chen, Xiangsen and Choudhary, Sajal and Gui, Rongze Daniel and Jiang, Ziran Will and Jiang, Ziyu},
  journal={arXiv preprint arXiv:2406.04744},
  year={2024}
}

@article{shang2024ada,
  title={Ada-MSHyper: adaptive multi-scale hypergraph transformer for time series forecasting},
  author={Shang, Zongjiang and Chen, Ling and Wu, Binqing and Cui, Dongliang},
  journal={Advances in Neural Information Processing Systems},
  volume={37},
  pages={33310--33337},
  year={2024}
}

@article{hochman1969pareto,
  title={Pareto optimal redistribution},
  author={Hochman, Harold M and Rodgers, James D},
  journal={The American economic review},
  volume={59},
  number={4},
  pages={542--557},
  year={1969},
  publisher={JSTOR}
}

@article{edge2024local,
  title={From local to global: A graph rag approach to query-focused summarization},
  author={Edge, Darren and Trinh, Ha and Cheng, Newman and Bradley, Joshua and Chao, Alex and Mody, Apurva and Truitt, Steven and Metropolitansky, Dasha and Ness, Robert Osazuwa and Larson, Jonathan},
  journal={arXiv preprint arXiv:2404.16130},
  year={2024}
}

@article{liu2025argus,
  title={Argus: Federated Non-convex Bilevel Learning over 6G Space-Air-Ground Integrated Network},
  author={Liu, Ya and Yang, Kai and Zhu, Yu and Yang, Keying and Zhao, Haibo},
  journal={arXiv preprint arXiv:2505.09106},
  year={2025}
}

@article{ma2025cellular,
  title={Cellular Traffic Prediction via Byzantine-robust Asynchronous Federated Learning},
  author={Ma, Hui and Yang, Kai and Jiao, Yang},
  journal={IEEE Transactions on Network Science and Engineering},
  year={2025},
  publisher={IEEE}
}

@inproceedings{liu2025nested,
  title={A nested zeroth-order fine-tuning approach for cloud-edge llm agents},
  author={Liu, Ya and Yang, Kai and Zhu, Yu and Yang, Keying and Jian, Chengtao and Ni, Wuguang and Ye, Xiaozhou and Ouyang, Ye},
  booktitle={Pacific-Asia Conference on Knowledge Discovery and Data Mining},
  pages={3--15},
  year={2025},
  organization={Springer}
}

@inproceedings{jiao2025pr,
  title={Pr-attack: Coordinated prompt-rag attacks on retrieval-augmented generation in large language models via bilevel optimization},
  author={Jiao, Yang and Wang, Xiaodong and Yang, Kai},
  booktitle={Proceedings of the 48th International ACM SIGIR Conference on Research and Development in Information Retrieval},
  pages={656--667},
  year={2025}
}

@inproceedings{wang2024feb4rag,
  title={Feb4rag: Evaluating federated search in the context of retrieval augmented generation},
  author={Wang, Shuai and Khramtsova, Ekaterina and Zhuang, Shengyao and Zuccon, Guido},
  booktitle={Proceedings of the 47th International ACM SIGIR Conference on Research and Development in Information Retrieval},
  pages={763--773},
  year={2024}
}

@article{liang2026fedmosaic,
  title={FedMosaic: Federated Retrieval-Augmented Generation via Parametric Adapters},
  author={Liang, Zhilin and Wang, Yuxiang and Zhou, Zimu and Zhang, Hainan and Liu, Boyi and Tong, Yongxin},
  journal={arXiv preprint arXiv:2602.05235},
  year={2026}
}

@inproceedings{he2025pfedrag,
  title={pFedRAG: A Personalized Federated Retrieval-Augmented Generation System with Depth-Adaptive Tiered Embedding Tuning},
  author={He, Hangyu and Yuan, Xin and Wu, Kai and Liu, Ren Ping and Ni, Wei},
  booktitle={Findings of the Association for Computational Linguistics: EMNLP 2025},
  year={2025}
}

@article{fajardo2025fedragframework,
  title={FedRAG: A Framework for Fine-Tuning Retrieval-Augmented Generation Systems},
  author={Fajardo, Val Andrei and Emerson, David B. and Singh, Amandeep and Chatrath, Veronica and Lotif, Marcelo and Desetty, Ravi Theja and Cheung, Chi Ho and Matsuba, Izuki},
  journal={arXiv preprint arXiv:2506.09200},
  year={2025}
}

@inproceedings{guerraoui2025ragroute,
  title={Efficient federated search for retrieval-augmented generation},
  author={Guerraoui, Rachid and Kermarrec, Anne-Marie and Petrescu, Diana and Pires, Rafael and Randl, Mathis and de Vos, Martijn},
  booktitle={Proceedings of the 5th Workshop on Machine Learning and Systems},
  pages={74--81},
  year={2025}
}

@article{xu2024cfedrag,
  title={C-FedRAG: A Confidential Federated Retrieval-Augmented Generation System},
  author={Xu, Ziyue},
  journal={arXiv preprint arXiv:2412.13163},
  year={2024}
}

@article{zhao2024frag,
  title={Frag: Toward federated vector database management for collaborative and secure retrieval-augmented generation},
  author={Zhao, Dongfang},
  journal={arXiv preprint arXiv:2410.13272},
  year={2024}
}

@inproceedings{gutierrez2025from,
  title={From RAG to Memory: Non-Parametric Continual Learning for Large Language Models},
  author={Guti{\'e}rrez, Bernal Jim{\'e}nez and Shu, Yiheng and Qi, Weijian and Zhou, Sizhe and Su, Yu},
  booktitle={International Conference on Machine Learning},
  pages={21497--21515},
  year={2025},
  organization={PMLR}
}

@misc{nvidia_jetson_orin,
  author = {{NVIDIA Corporation}},
  title = {NVIDIA Jetson Orin},
  year = {2022},
  url = {https://www.nvidia.com/en-us/autonomous-machines/embedded-systems/jetson-orin/},
  note = {Accessed: 2026-03-24}
}

@article{honnibal2020spacy,
  title={spaCy: Industrial-strength natural language processing in Python},
  author={Honnibal, Matthew and Montani, Ines and Van Landeghem, Sofie and Boyd, Adriane and others},
  year={2020},
  publisher={Zenodo, Honolulu, HI, USA}
}

@article{fan2025minirag,
  title={Minirag: Towards extremely simple retrieval-augmented generation},
  author={Fan, Tianyu and Wang, Jingyuan and Ren, Xubin and Huang, Chao},
  journal={arXiv preprint arXiv:2501.06713},
  year={2025}
}

@inproceedings{liu2024mobilellm,
title={Mobile{LLM}: Optimizing Sub-billion Parameter Language Models for On-Device Use Cases},
author={Zechun Liu and Changsheng Zhao and Forrest Iandola and Chen Lai and Yuandong Tian and Igor Fedorov and Yunyang Xiong and Ernie Chang and Yangyang Shi and Raghuraman Krishnamoorthi and Liangzhen Lai and Vikas Chandra},
booktitle={Forty-first International Conference on Machine Learning},
year={2024}
}

@article{xu2025distributed,
  title={Distributed retrieval-augmented generation},
  author={Xu, Chenhao and Gao, Longxiang and Miao, Yuan and Zheng, Xi},
  journal={arXiv preprint arXiv:2505.00443},
  year={2025}
}

@inproceedings{chakraborty-etal-2025-federated,
    title = "Federated Retrieval-Augmented Generation: A Systematic Mapping Study",
    author = "Chakraborty, Abhijit  and
      Dahal, Chahana  and
      Gupta, Vivek",
    booktitle = "Findings of the Association for Computational Linguistics: EMNLP 2025",
    month = nov,
    year = "2025",
    publisher = "Association for Computational Linguistics",
    pages = "7362--7374"
}

@article{oche2025systematic,
  title={A systematic review of key retrieval-augmented generation (rag) systems: Progress, gaps, and future directions},
  author={Oche, Agada Joseph and Folashade, Ademola Glory and Ghosal, Tirthankar and Biswas, Arpan},
  journal={arXiv preprint arXiv:2507.18910},
  year={2025}
}

@book{boyd2004convex,
  title={Convex optimization},
  author={Boyd, Stephen and Vandenberghe, Lieven},
  year={2004},
  publisher={Cambridge university press}
}

@article{salton1989automatic,
  title={Automatic text processing: The transformation, analysis, and retrieval of},
  author={Salton, Gerard},
  journal={Reading: Addison-Wesley},
  volume={169},
  year={1989}
}

@article{warner1965randomized,
  title={Randomized response: A survey technique for eliminating evasive answer bias},
  author={Warner, Stanley L},
  journal={Journal of the American statistical association},
  volume={60},
  number={309},
  pages={63--69},
  year={1965},
  publisher={Taylor \& Francis}
}

@inproceedings{dwork2008differential,
  title={Differential privacy: A survey of results},
  author={Dwork, Cynthia},
  booktitle={International conference on theory and applications of models of computation},
  pages={1--19},
  year={2008},
  organization={Springer}
}

\appendix

% !TEX root = ../acl_lualatex.tex
\section{Proof of Proposition~\ref{prop:main-convergence}}
\label{appendix:convergence-proof}

\subsection{Problem Setup}

We analyze the convergence of the parameter \( H^t \in \mathbb{R}^n \) in an hypergraph learning framework. The variable \( H^t \) is constrained to lie on the probability simplex \( \Delta^n \), defined as follows.

\begin{definition}[Probability Simplex]
The $n$-dimensional probability simplex is
\(
\Delta^n := \{ H \in \mathbb{R}^n \mid H_i \geq 0 \text{ for all } i,\ \sum_{i=1}^n H_i = 1 \}.
\)
\end{definition}

We consider the optimization problem \( \min_{H^t \in \Delta^n} L_{\text{total}}(H^t) \), where \( L_{\text{total}} : \mathbb{R}^n \to \mathbb{R} \) is a differentiable objective function.

\begin{assumption}[Smoothness]
The objective function \( L_{\text{total}} \) is $L$-smooth, i.e., \( \| \nabla L_{\text{total}}(x) - \nabla L_{\text{total}}(y) \| \leq L \| x - y \| \) for all \( x, y \in \mathbb{R}^n \).
\end{assumption}

We employ Projected Gradient Descent (PGD) to solve the above problem.

\begin{definition}[Projected Gradient Descent (PGD)]
Given a step size \( \eta > 0 \), the PGD update rule is
\begin{equation}
    H^t_{k+1} = \operatorname{Proj}_{\Delta^n} \left( H^t_k - \eta \nabla L_{\text{total}}(H^t_k) \right),
\end{equation}
where \( \operatorname{Proj}_{\Delta^n}(\cdot) \) denotes the Euclidean projection onto \( \Delta^n \).
\end{definition}

\begin{definition}[Gradient Mapping]
The gradient mapping at iteration \( k \) is \( G^t_k := \frac{H^t_k - H^t_{k+1}}{\eta} \).
\end{definition}

\subsection{Descent Property and Convergence}

We now establish the descent property of the PGD updates.

\begin{lemma}[Descent Lemma for PGD]
\label{lemma:descent}
Let \( \eta = \frac{1}{L} \). Then, for each iteration \( k \), the following holds:
\begin{equation}
    L_{\text{total}}(H^t_{k+1}) \leq L_{\text{total}}(H^t_k) - \frac{1}{2L} \| G^t_k \|^2.
\end{equation}
\end{lemma}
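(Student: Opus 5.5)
The plan is to combine the standard quadratic upper bound from $L$-smoothness with the variational characterization of the Euclidean projection onto the closed convex set $\Delta^n$. Write $f := L_{\text{total}}$, $x := H^t_k$ and $x^+ := H^t_{k+1} = \operatorname{Proj}_{\Delta^n}(x - \eta \nabla f(x))$, so that $x - x^+ = \eta G^t_k$.

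First, invoke the descent inequality implied by the smoothness assumption, namely $f(y) \le f(x) + \langle \nabla f(x), y - x\rangle + \frac{L}{2}\|y-x\|^2$ for all $x,y$. This follows by integrating $\nabla f$ along the segment from $x$ to $y$ and applying Cauchy--Schwarz together with the Lipschitz bound; I will treat it as standard. Applying it with $y = x^+$ and substituting $x^+ - x = -\eta G^t_k$ gives
\begin{equation*}
f(x^+) \le f(x) - \eta \langle \nabla f(x), G^t_k\rangle + \frac{L\eta^2}{2}\|G^t_k\|^2 .
\end{equation*}

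The key step is to lower-bound the inner product $\langle \nabla f(x), G^t_k\rangle$ by $\|G^t_k\|^2$. The projection is characterized by $\langle z - \operatorname{Proj}(z), w - \operatorname{Proj}(z)\rangle \le 0$ for all $w \in \Delta^n$, because $\Delta^n$ is nonempty, closed and convex. Take $z = x - \eta\nabla f(x)$ and $w = x$, which is feasible since every iterate lies in $\Delta^n$; the initial point is assumed feasible, or is itself a projection. This yields $\langle x - \eta\nabla f(x) - x^+, x - x^+\rangle \le 0$. Substituting $x - x^+ = \eta G^t_k$ and dividing by $\eta^2$ gives $\|G^t_k\|^2 \le \langle \nabla f(x), G^t_k\rangle$.

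Inserting this bound into the display gives $f(x^+) \le f(x) - \left(\eta - \frac{L\eta^2}{2}\right)\|G^t_k\|^2$. With $\eta = 1/L$ the coefficient becomes $\frac{1}{L} - \frac{1}{2L} = \frac{1}{2L}$, which is exactly the claim. The only delicate point is the projection inequality and checking that $x$ is a valid comparison point in $\Delta^n$; the rest is algebra. I would note explicitly that nothing specific to the simplex is used beyond its closed convexity.
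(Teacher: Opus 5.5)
Your proof is correct and matches the paper's argument step for step: the $L$-smoothness quadratic upper bound, the substitution $H^t_{k+1}-H^t_k=-\eta G^t_k$, the projection optimality condition tested at $y=H^t_k$ to obtain $g_k^\top G^t_k \ge \|G^t_k\|^2$, and the choice $\eta=1/L$. Your added remarks, that $H^t_k$ must itself be feasible and that only the closed convexity of $\Delta^n$ is used, make explicit points the paper leaves implicit; the second also means the argument covers the row-wise product of simplices that is actually optimized.
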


\begin{proof}
For brevity, let \( L_k := L_{\text{total}}(H^t_k) \) and \( g_k := \nabla L_{\text{total}}(H^t_k) \). By $L$-smoothness,
\begin{equation}
    L_{k+1} \leq L_k + g_k^\top (H^t_{k+1} - H^t_k) + \frac{L}{2} \| H^t_{k+1} - H^t_k \|^2.
\end{equation}

Substituting \( H^t_{k+1} - H^t_k = -\eta G^t_k \), we obtain
\begin{equation}
    L_{k+1} \leq L_k - \eta g_k^\top G^t_k + \frac{L \eta^2}{2} \| G^t_k \|^2.
\end{equation}

Next, the optimality condition of Euclidean projection gives
\begin{equation}
    \left\langle H^t_k - \eta g_k - H^t_{k+1},\, y - H^t_{k+1} \right\rangle \leq 0, \quad \forall y \in \Delta^n.
\end{equation}
Setting \( y = H^t_k \) and using the definition of \( G^t_k \), we get
\begin{equation}
    g_k^\top G^t_k \geq \| G^t_k \|^2.
\end{equation}
Substituting this bound back yields
\begin{equation}
    L_{k+1} \leq L_k - \eta \| G^t_k \|^2 + \frac{L \eta^2}{2} \| G^t_k \|^2.
\end{equation}
Choosing \( \eta = \frac{1}{L} \), we obtain \( L_{k+1} \leq L_k - \frac{1}{2L} \| G^t_k \|^2 \), which is exactly the desired inequality.
\end{proof}

\paragraph{Restatement of Proposition~\ref{prop:main-convergence}.}
Let \( H^t_* \in \Delta^n \) be the optimal solution and define \( \Delta := L_{\text{total}}(H^t_0) - L_{\text{total}}(H^t_*) \). Then, after \( T \) iterations of PGD with \( \eta = \frac{1}{L} \), the minimum norm of the gradient mapping satisfies:
\begin{equation}
    \min_{0 \leq k < T} \| G^t_k \|^2 \leq \frac{2L \Delta}{T}.
\end{equation}
Consequently, to achieve \( \| G^t_k \| \leq \epsilon \), the number of iterations required is:
\begin{equation}
    T \geq \frac{2L \Delta}{\epsilon^2},
\end{equation}
i.e., the iteration complexity is \( \mathcal{O}(1/\epsilon^2) \).

\begin{proof}
Summing the descent inequality from Lemma~\ref{lemma:descent} over \( k = 0 \) to \( T - 1 \), we obtain
\(
\sum_{k=0}^{T-1} \frac{1}{2L} \| G^t_k \|^2 \leq L_{\text{total}}(H^t_0) - L_{\text{total}}(H^t_T) \leq \Delta
\),
and hence \( \sum_{k=0}^{T-1} \| G^t_k \|^2 \leq 2L \Delta \). Therefore,
\(
\min_{0 \leq k < T} \| G^t_k \|^2
\leq \frac{1}{T} \sum_{k=0}^{T-1} \| G^t_k \|^2
\leq \frac{2L \Delta}{T}
\).
Solving for \( T \) such that the right-hand side is at most \( \epsilon^2 \) gives the desired result.
\end{proof}
\section{Proof of Proposition~\ref{prop:main-ldp}}
\label{appendix:ldp-proof}
\subsection{Preliminaries}

\begin{definition}[Local Differential Privacy]
A randomized mechanism $\mathcal{M} \colon \mathcal{X} \to \mathcal{Y}$ satisfies $\epsilon$-local differential privacy ($\epsilon$-LDP) if, for all inputs $x, x' \in \mathcal{X}$ and all outputs $y \in \mathcal{Y}$, it holds that
\[
\frac{\Pr[\mathcal{M}(x) = y]}{\Pr[\mathcal{M}(x') = y]} \leq e^{\epsilon}.
\]
\end{definition}

This condition ensures that the mechanism's output does not reveal significant information about any individual input, thereby preserving privacy in the local model.

\subsection{Restatement of Proposition~\ref{prop:main-ldp}}

\paragraph{Proposition~\ref{prop:main-ldp}.}
Let $e$ be a sensitive entity and $W$ a candidate set with $|W| = c$. Consider the perturbation mechanism defined by the conditional distribution
\[
\Pr[e' \mid e] =
\begin{cases}
\frac{e^{\epsilon}}{e^{\epsilon} + c - 1}, & \text{if } e' = e, \\
\frac{1}{e^{\epsilon} + c - 1}, & \text{if } e' \in W, \, e' \neq e.
\end{cases}
\]
Then, this mechanism satisfies $\epsilon$-local differential privacy.

\subsection{Proof of Proposition~\ref{prop:main-ldp}}

We prove the proposition by verifying the $\epsilon$-LDP condition for all possible inputs $e, e^* \in \{e\} \cup W$ and outputs $e' \in \{e\} \cup W$. Specifically, we must show:
\[
\frac{\Pr[e' \mid e]}{\Pr[e' \mid e^*]} \leq e^{\epsilon}.
\]

\begin{proof}
We consider the following exhaustive cases:

\begin{itemize}
    \item \textbf{Case 1: $e' = e = e^*$.} \\
    \[
    \frac{\Pr[e' \mid e]}{\Pr[e' \mid e^*]} = \frac{\frac{e^{\epsilon}}{e^{\epsilon} + c - 1}}{\frac{e^{\epsilon}}{e^{\epsilon} + c - 1}} = 1 \leq e^{\epsilon}.
    \]

    \item \textbf{Case 2: $e' = e \neq e^*$.} \\
    \[
    \frac{\Pr[e' \mid e]}{\Pr[e' \mid e^*]} = \frac{\frac{e^{\epsilon}}{e^{\epsilon} + c - 1}}{\frac{1}{e^{\epsilon} + c - 1}} = e^{\epsilon}.
    \]

    \item \textbf{Case 3: $e' = e^* \neq e$.} \\
    \[
    \frac{\Pr[e' \mid e]}{\Pr[e' \mid e^*]} = \frac{\frac{1}{e^{\epsilon} + c - 1}}{\frac{e^{\epsilon}}{e^{\epsilon} + c - 1}} = \frac{1}{e^{\epsilon}} \leq e^{\epsilon}.
    \]

    \item \textbf{Case 4: $e' \neq e$ and $e' \neq e^*$.} \\
    \[
    \frac{\Pr[e' \mid e]}{\Pr[e' \mid e^*]} = \frac{\frac{1}{e^{\epsilon} + c - 1}}{\frac{1}{e^{\epsilon} + c - 1}} = 1 \leq e^{\epsilon}.
    \]
\end{itemize}

In all cases, the privacy condition $\frac{\Pr[e' \mid e]}{\Pr[e' \mid e^*]} \leq e^{\epsilon}$ is satisfied. Therefore, the mechanism guarantees $\epsilon$-local differential privacy.
\end{proof}
\section{Experiments Details}
\label{appendix:experiment-details}
\subsection{Statistics of Datasets}
Table~\ref{tab:benchmark_stats} provides detailed statistics of the datasets used in our experiments, including HotPotQA, 2WikiMQA, and MuSiQue. These datasets vary in size and complexity, offering a comprehensive evaluation framework for multi-hop question answering models. 
\label{app:benchmark}
\begin{table*}[htbp]
    \centering
    \caption{Statistics of Datasets.}
    \label{tab:benchmark_stats}
        \begin{tabular}{lccc}
            \toprule
            \textbf{Dataset} & \textbf{Avg \#Tokens} & \textbf{Max \#Tokens} & \textbf{\#Samples} \\
            \midrule
            \textbf{HotPotQA}        & 9.1k  & 12.7k & 500 \\
            \textbf{2WikiMQA}        & 9.2k  & 12.3k & 500 \\
            \textbf{MuSiQue}         & 11.1k & 17.3k & 500 \\
            \bottomrule
        \end{tabular}
\end{table*}
\subsection{Implementation Details}
\label{app:implementation}
We use $\mu=0.5$, $K=5$, $\lambda=0.6$, and $\delta=0.8$ across all datasets and clients. For privacy-preserving memory sharing, we set the local differential privacy budget to $\epsilon=1.0$ with candidate set size $c=5$. We optimize the semantic-aware hypergraph learning objective for 300 steps using a learning rate of 0.05, and keep the same hyperparameter configuration throughout training. We use BGE-M3~\citep{chen2024bge} as the dense encoder and \texttt{Llama-3.1-8B}~\citep{grattafiori2024llama} in INT4 quantized form as the language model. To ensure a consistent federated protocol~\citep{liu2025argus}, all federated experiments use five clients.

For the multi-granularity hypergraph module, we set the numbers of candidate hyperedges as $M^p=\lceil N_p/4 \rceil$ and $M^s=\lceil N_s/4 \rceil$, where $N_p$ and $N_s$ denote the numbers of paragraph- and sentence-level nodes, respectively. Each soft incidence matrix $\hat{H}^t$ is initialized with positive random values and row-wise normalized onto the probability simplex, after which the paragraph- and sentence-level hypergraphs are jointly optimized under the unified objective in Eq.~(7), while maintaining separate incidence matrices and hyperedge prototypes for the two granularities.

We deploy FD-RAG on a real NVIDIA Jetson Orin Nano 8GB edge device. Table~\ref{tab:eval_hardware} summarizes the evaluation hardware configuration. We follow the official device specifications and report end-to-end latency measured on this platform with batch size 1 under a 15~W power limit~\citep{nvidia_jetson_orin}.

\begin{table}[t]
    \centering
    \small
    \caption{Evaluation Hardware.}
    \label{tab:eval_hardware}
    \setlength{\tabcolsep}{6pt}
    \renewcommand{\arraystretch}{1.05}
    \begin{tabular}{@{}l p{0.72\linewidth}@{}}
        \toprule
        \textbf{Component} & \textbf{Specification} \\
        \midrule
        CPU & Cortex-A78AE, 1.2\,GHz, 6-core \\
        GPU & Ampere, 1024 CUDA cores, 32 Tensor cores, 625\,MHz \\
        Power Limit & 15\,W \\
        DRAM & 8\,GiB LPDDR5-4250 \\
        Storage & 512\,GB SD Card, UHS \\
        \bottomrule
    \end{tabular}
\end{table}

\subsection{Baselines}
\label{appendix:baselines}
We compare FD-RAG with baselines in local and federated settings. Methods without a native federated design are evaluated only locally; federated comparisons are limited to methods with explicit decentralized protocols.

%%%%D1. Training Convergence Curves
\begin{figure*}[t]
    \centering
    \begin{subfigure}[b]{0.32\textwidth}
        \centering
        \includegraphics[width=\textwidth]{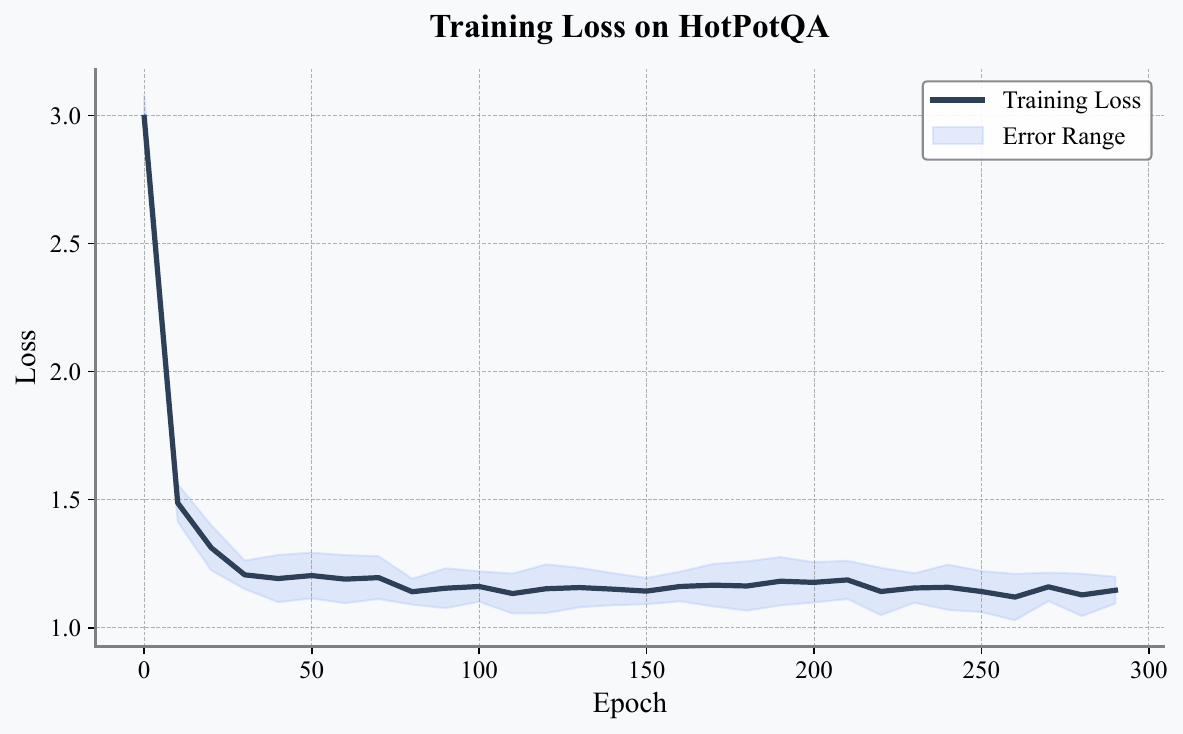}
        \caption{HotPotQA}
        \label{fig:hotpotqa}
    \end{subfigure}
    \hfill
    \begin{subfigure}[b]{0.32\textwidth}
        \centering
        \includegraphics[width=\textwidth]{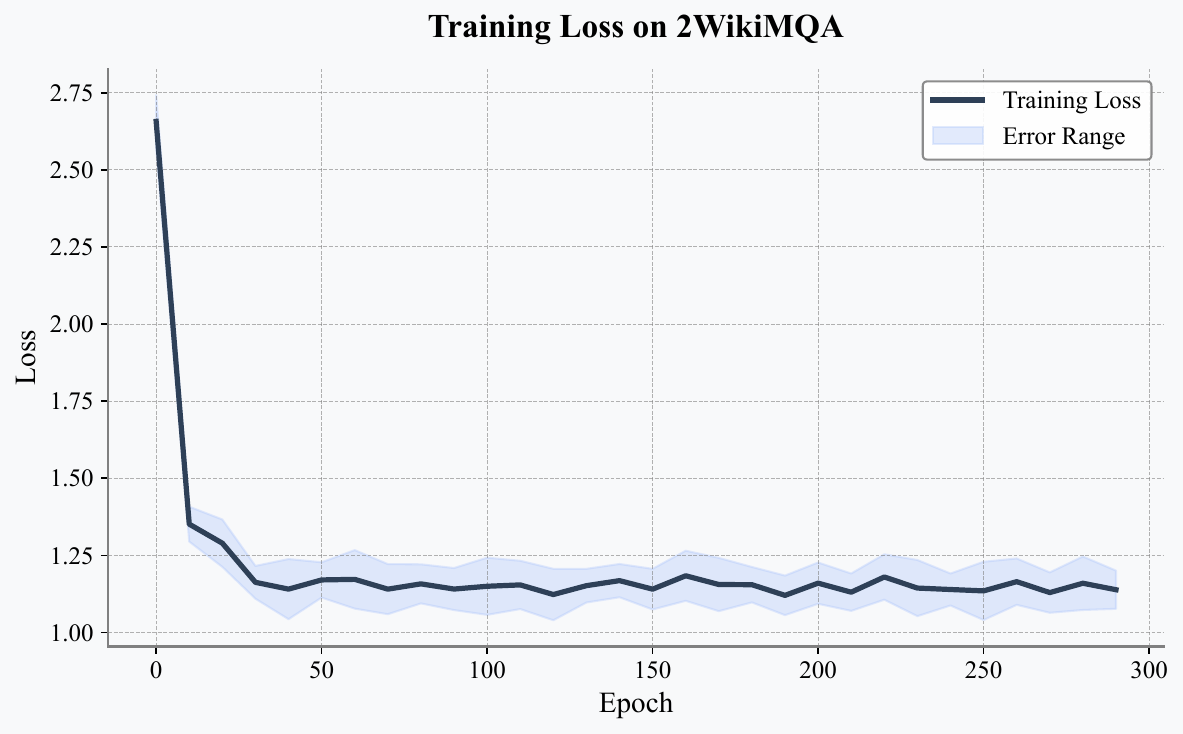}
        \caption{2WikiMQA}
        \label{fig:2wikimqa}
    \end{subfigure}
    \hfill
    \begin{subfigure}[b]{0.32\textwidth}
        \centering
        \includegraphics[width=\textwidth]{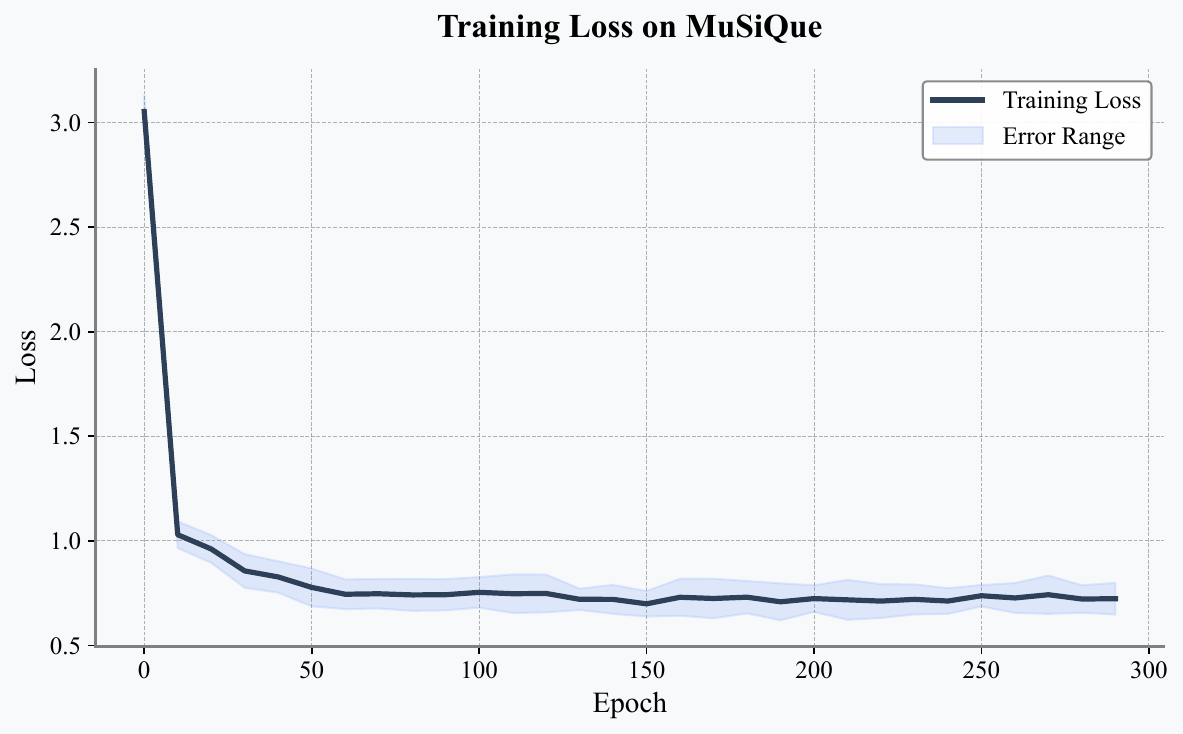}
        \caption{MuSiQue}
        \label{fig:musique}
    \end{subfigure}
    \caption{Training Loss Curves on Three QA Datasets: HotPotQA, 2WikiMQA, MuSiQue}
    \label{fig:datasets}
\end{figure*}

\begin{itemize}
    \item \textbf{Vanilla RAG} \citep{lewis2020retrieval}: Integrates a retriever with a generator, retrieving the top-5 relevant documents to augment context for improved answer generation.
    \item \textbf{IterDRAG} \citep{yueinference}: Segments complex queries into sub-queries, utilizing iterative retrieval and in-context learning to refine answers progressively through a reasoning chain.
    \item \textbf{LongRAG} \citep{jiang2024longrag}: Employs a dual-perspective approach, combining an information extractor, chain-of-thought-guided filter, and generator to address challenges in processing long texts and identifying fine-grained factual details.
    \item \textbf{EfficientRAG} \citep{zhuang2024efficientrag}: Trains lightweight models to iteratively generate queries and filter irrelevant information, enhancing retrieval efficiency and performance in multi-hop question answering.
    \item \textbf{RAPTOR} \citep{sarthi2024raptor}: Constructs a hierarchical summary tree through recursive embedding, clustering, and summarization, retrieving insights across abstraction levels to handle long documents effectively.
    \item \textbf{GraphRAG} \citep{edge2024local}: GraphRAG is a graph-based RAG method that constructs an entity knowledge graph and leverages community-level summaries to generate and aggregate responses for global queries.
    \item \textbf{HippoRAG 2} \citep{gutierrez2025from}: HippoRAG 2 builds a passage-linked knowledge graph and performs retrieval with Personalized PageRank, improving multi-hop evidence association while preserving strong factual recall.
    \item \textbf{HyperGraphRAG} \citep{luo2025hypergraphrag}: HyperGraphRAG is a hypergraph-based RAG method that models n-ary relations via hyperedges, enabling more accurate and efficient retrieval and generation than standard and graph-based RAG.
    \item \textbf{C-FedRAG} \citep{xu2024cfedrag}: C-FedRAG is a federated RAG framework that leverages confidential computing to enable secure and scalable retrieval across decentralized data sources.
    \item \textbf{RAGRoute} \citep{guerraoui2025ragroute}: Introduces a routing mechanism to dynamically select the most appropriate retrieval and generation strategies based on query characteristics.
\end{itemize}

%%%%%D3. Privacy Protection Evaluation
\begin{table*}[t]
\centering
\footnotesize
\caption{Privacy protection evaluation under an LLM restoration attack in the federated setting. FD-RAG Privacy uses the default budget $\epsilon=1.0$ with candidate set size $c=5$. Restoration Acc@1 measures how often the attacker correctly recovers the original sensitive entity from a shared QA/fact item; lower is better. QA ACC measures downstream utility after sharing; higher is better. Best results in each column are \textbf{bolded}; second best are \underline{underlined}.}
\label{tab:privacy}
\setlength{\tabcolsep}{4.2pt}
\renewcommand{\arraystretch}{1.12}
\begin{tabular*}{\textwidth}{@{\extracolsep{\fill}}lcccccccc@{}}
\toprule
\multirow{2}{*}{\textbf{Method}} &
\multicolumn{2}{c}{\textbf{HotPotQA}} &
\multicolumn{2}{c}{\textbf{2WikiMQA}} &
\multicolumn{2}{c}{\textbf{MuSiQue}} &
\multicolumn{2}{c}{\textbf{Average}} \\
\cmidrule(lr){2-3} \cmidrule(lr){4-5} \cmidrule(lr){6-7} \cmidrule(lr){8-9}
& \textbf{Rest.@1}$\downarrow$ & \textbf{ACC}$\uparrow$
& \textbf{Rest.@1}$\downarrow$ & \textbf{ACC}$\uparrow$
& \textbf{Rest.@1}$\downarrow$ & \textbf{ACC}$\uparrow$
& \textbf{Rest.@1}$\downarrow$ & \textbf{ACC}$\uparrow$ \\
\midrule
\textsc{No Protection}
& 95.1 & \textbf{65.4}
& 92.8 & \textbf{59.8}
& 89.4 & \textbf{36.5}
& 92.4 & \textbf{53.9} \\

\textsc{Type Masking}
& \textbf{3.9} & 59.0
& \textbf{3.4} & 53.7
& \textbf{2.8} & 31.9
& \textbf{3.4} & 48.2 \\

\textbf{FD-RAG Privacy (Ours)}
& \underline{9.6} & \underline{64.5}
& \underline{8.9} & \underline{58.9}
& \underline{7.8} & \underline{35.7}
& \underline{8.8} & \underline{53.0} \\
\bottomrule
\end{tabular*}
\end{table*}

\section{Additional Experimental Results}
\label{additional-result}
\subsection{Training Convergence Results}
\label{convergence}
We evaluated the convergence performance of our method across three distinct datasets. As illustrated in Figure~\ref{fig:datasets} , the loss curves of these datasets demonstrate the following trends: 1) Rapid Convergence: Across all datasets, the loss values sharply decline during the initial stages of training (within the first 50 epochs), indicating that the model swiftly captures thematic information and optimizes effectively early on. 2) Stability: As training progresses, the loss values stabilize after 100 epochs, suggesting that the model has largely converged without significant oscillations, which indicates the robustness of hypergraph learning across different datasets. 3) Consistency: The uniformity of the loss trends across the three datasets further supports the applicability of this method to various types of question-answering datasets, highlighting its adaptability and generalization capacity.

\subsection{Offline Construction Cost Analysis}
\label{app:offline_cost}
%%%% D2. Offline Construction Cost Comparison
\begin{table*}[t]
\centering
\small
\caption{Offline construction cost on HotPotQA under our deployment setting. All methods use \texttt{BGE-M3} and \texttt{Llama-3.1-8B} (INT4) on Jetson Orin Nano 8GB. OutTok/1kT denotes the number of LLM output tokens generated during construction per 1k corpus tokens. TP1kT denotes wall-clock construction time per 1k corpus tokens. Peak Mem. is measured during the offline construction stage. Lower is better for all metrics.}
\label{tab:offline_cost}
\setlength{\tabcolsep}{7pt}
\renewcommand{\arraystretch}{1.08}
\begin{tabular}{lccc}
\toprule
\textbf{Method} & \textbf{OutTok/1kT}$\downarrow$ & \textbf{TP1kT (s)}$\downarrow$ & \textbf{Peak Mem. (GB)}$\downarrow$ \\
\midrule
RAPTOR~\citep{sarthi2024raptor} & 88 & 12.7 & 6.0 \\
GraphRAG~\citep{edge2024local} & 542 & 39.4 & 7.4 \\
HippoRAG~2~\citep{gutierrez2025from} & 176 & 16.9 & 7.6 \\
HyperGraphRAG~\citep{luo2025hypergraphrag} & 214 & 19.3 & 7.2 \\
\textbf{FD-RAG (ours)} & \textbf{58} & \textbf{7.1} & \textbf{5.5} \\
\bottomrule
\end{tabular}
\end{table*}

We evaluate the offline construction cost of FD-RAG on HotPotQA and compare it with RAPTOR, GraphRAG, HippoRAG~2, and HyperGraphRAG under the same deployment setting as the main experiments. All methods use the same dense encoder (\texttt{BGE-M3}) and local generator (\texttt{Llama-3.1-8B}, INT4), and all measurements are collected on Jetson Orin Nano 8GB.

We report three metrics: LLM output tokens per 1k corpus tokens (\textbf{OutTok/1kT}), wall-clock construction time per 1k corpus tokens (\textbf{TP1kT}), and peak memory during construction. For FD-RAG, only QA-memory synthesis contributes to LLM token usage; \textsc{spaCy}-based fact extraction and hypergraph optimization are counted in runtime but not in token usage. Lower is better for all metrics.

As shown in Table~\ref{tab:offline_cost}, FD-RAG achieves the lowest cost on all three metrics, with 58 OutTok/1kT, 7.1 seconds TP1kT, and 5.5~GB peak memory. Compared with HippoRAG~2, FD-RAG reduces token usage, construction time, and peak memory by 67.0\%, 58.0\%, and 27.6\%, respectively. Compared with HyperGraphRAG, it reduces construction time from 19.3 to 7.1 seconds and peak memory from 7.2 to 5.5~GB. These results indicate that FD-RAG offers a more efficient offline construction profile and is better suited to resource-constrained edge deployment.

\subsection{Privacy Protection Evaluation}
\label{app:privacy_eval}

Given the growing security concerns around RAG systems~\citep{jiao2025pr}, we further assess whether memories sanitized before federation still leak recoverable sensitive entities. To this end, we extract person, organization, and location mentions from shared QA pairs and supporting-fact tuples, and perform an LLM restoration attack with \texttt{gpt-4o-mini}: given a sanitized item, the model is prompted to recover the original entity from context. We compare three sharing strategies: \textsc{No Protection}, which shares raw memories; \textsc{Type Masking}, which replaces each sensitive entity with a coarse placeholder such as \texttt{[PERSON]}; and \textsc{FD-RAG Privacy} (ours), which applies the proposed semantic candidate perturbation under $\epsilon$-LDP. We use Restoration Acc@1 to measure privacy leakage and downstream QA accuracy to measure utility. The $\epsilon=1.0$ setting for \textsc{FD-RAG Privacy} is the default federated configuration.

Table~\ref{tab:privacy} reveals a clear privacy--utility trade-off. \textsc{No Protection} yields restoration accuracy above 89 on all three benchmarks, indicating that raw shared memories leak entity identity almost directly. \textsc{Type Masking} minimizes leakage, reducing average Restoration Acc@1 to 3.4, but substantially degrades utility because coarse placeholders remove task-relevant semantics for retrieval and answer selection. In contrast, \textsc{FD-RAG Privacy} keeps the attacker success rate below 10\% on every dataset while preserving nearly the same QA performance as raw sharing. Averaged across datasets, it reduces Restoration Acc@1 from 92.4 to 8.8, a 90.5\% relative reduction in leakage compared with \textsc{No Protection}, while retaining 98.3\% of its utility (53.0 vs.\ 53.9 average ACC). Relative to \textsc{Type Masking}, it improves downstream accuracy by 4.8 points on average at the cost of only a modest increase in restoration accuracy. These results show that FD-RAG Privacy offers a substantially better privacy--utility balance than either raw sharing or coarse masking.

\begin{table}[t]
\centering
\footnotesize
\caption{Privacy--utility trade-off of FD-RAG Privacy under different local privacy budgets $\epsilon$ in the federated setting. We fix the candidate set size to $c=5$ and report averages over HotPotQA, 2WikiMQA, and MuSiQue. The default setting used in the main experiments is $\epsilon=1.0$. Lower Restoration Acc@1 indicates less leakage; higher QA ACC indicates better utility.}
\label{tab:privacy_epsilon}
\setlength{\tabcolsep}{8pt}
\renewcommand{\arraystretch}{1.1}
\begin{tabular}{lcc}
\toprule
\textbf{Privacy Budget} & \textbf{Rest.@1}$\downarrow$ & \textbf{ACC}$\uparrow$ \\
\midrule
$\epsilon=0.1$ & 4.7 & 49.4 \\
$\epsilon=0.5$ & 6.4 & 51.4 \\
\textbf{$\epsilon=1.0$ (default)} & 8.8 & 53.0 \\
$\epsilon=2.0$ & 14.3 & 53.6 \\
\bottomrule
\end{tabular}
\end{table}

Table~\ref{tab:privacy_epsilon} further examines this trade-off under different privacy budgets. As expected, larger $\epsilon$ weakens perturbation: the average restoration success rate rises from 4.7 at $\epsilon=0.1$ to 14.3 at $\epsilon=2.0$, while QA accuracy increases from 49.4 to 53.6. Notably, even under the strictest setting, FD-RAG Privacy still outperforms \textsc{Type Masking} in utility (49.4 vs.\ 48.2 average ACC), suggesting that semantically constrained perturbation preserves more task-relevant information than coarse placeholders. The default setting $\epsilon=1.0$ provides a balanced operating point: it keeps leakage an order of magnitude below raw sharing (8.8 vs.\ 92.4) while retaining nearly all downstream utility (53.0 vs.\ 53.9). Overall, FD-RAG supports a smooth and practical privacy--utility trade-off.

\subsection{Analysis of Generated QA Memory Questions}
Following prior work~\citep{guinet2024automated}, we conduct a statistical analysis of the generated QA-memory questions to assess their diversity and coverage. Concretely, for each dataset, we randomly sampled 100 source documents (contexts) and analyzed all generated questions derived from them along two dimensions:
%%%% D4. Generated Question Analysis
\begin{table*}[t]
\centering
\small
\caption{Number of Questions by Type Across Different Datasets}
\label{tab:question_types}
\begin{tabular}{l *{3}{S[table-format=4.0]}}
\toprule
\textbf{Question Type} & \textbf{HotPotQA} & \textbf{MuSiQue} & \textbf{2WikiMQA} \\
\midrule
Aggregation Questions           & 953  & 894  & 1188 \\
Comparison Questions            & 910  & 844  & 1105 \\
False Premise Questions         & 912  & 854  & 1097 \\
Multi-hop Questions             & 909  & 862  & 1143 \\
Post-processing Heavy Questions & 930  & 876  & 1152 \\
Set Questions                   & 1469 & 1459 & 1562 \\
\bottomrule
\end{tabular}
\end{table*}

\begin{figure*}[t]
    \centering
    \includegraphics[width=0.8\textwidth]{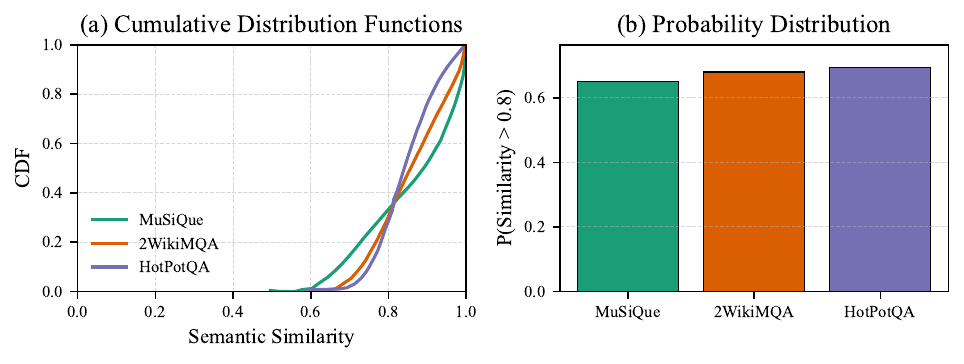}
    \caption{Performance evaluation of the generated questions with respect to semantic similarity.}
    \label{fig:cdf}
\end{figure*}

\paragraph{Question Type Analysis.} Drawing on several prior research efforts \citep{guinet2024automated,yang2024crag}, we further divided the complex questions generated by our QA memory construction pipeline into six categories: Set, Comparison, Aggregation, Multi-hop, Post-processing Heavy, and False Premise. This classification allows us to better assess the diversity and depth of QA memory question generation. As shown in Table~\ref{tab:question_types}, our approach consistently covers a wide range of question types across different benchmark datasets (HotPotQA, MuSiQue, and 2WikiMQA), demonstrating its strong generalization capability. In particular, the substantial presence of complex forms such as Set and Multi-hop questions suggests that the generated questions are not only varied in structure but also require multi-faceted reasoning and synthesis across information sources—an essential characteristic for evaluating evidence-grounded multi-hop QA.

\paragraph{Question Diversity and Coverage Analysis.} As shown in Figure~\ref{fig:cdf} (a), we present the cumulative distribution function (CDF) of semantic similarity between the original questions and the top-1 most semantically similar questions generated by our QA memory question generation method across three datasets: MuSiQue, 2WikiMQA, and HotPotQA. This distribution reflects how closely the generated questions align with the original questions at a semantic level. A wide and smooth distribution indicates that our method is capable of generating questions with varying degrees of similarity---ranging from highly similar to more diverse ones---demonstrating both strong semantic coverage and question diversity.

To quantify this, Figure~\ref{fig:cdf} (b) shows the proportion of generated questions whose semantic similarity with the original question exceeds 0.8. In all three datasets, this proportion exceeds 60\%, indicating that a substantial number of generated questions are semantically close to the originals. This high proportion suggests that our method effectively captures the core semantics of the input while also generating a broad range of diverse questions. Together, these results validate that our approach achieves a good balance between semantic fidelity and diversity, resulting in high-quality and broadly representative question generation.

\section{Prompt Templates}
\label{appendix:prompt} 
We provide the prompt templates used in our experiments. The prompts are designed to elicit specific information from the model, guiding it to generate accurate and relevant responses.

\begin{table*}
\begin{tcolorbox}[colback=gray!2!white, colframe=gray, width=\textwidth]
\textbf{Prompt Template:}\\
\textbf{Role:}\\
You are an advanced information system responsible for generating retrieval-oriented QA memory questions grounded in the provided atomic facts and original text.\\

\textbf{Task:}\\
Your task is to generate complex questions based on extracted atomic facts and the original text. The questions should be answerable using only the provided information and, when appropriate, require multi-fact integration (e.g., comparison, aggregation, or multi-hop reasoning) to support downstream retrieval and evidence-grounded answering.\\

\textbf{Requirements:}\\
Questions must strictly rely on the extracted atomic facts and original text, without introducing any external information.\\
Prefer questions that are specific, unambiguous, and informative for retrieval (avoid overly generic prompts).\\
Encourage compositional reasoning when supported by the facts (e.g., Set / Comparison / Aggregation / Multi-hop / Post-processing Heavy / False Premise).\\
Answers must accurately reflect the original content and refer to specific expressions in the text or atomic facts whenever possible.\\
Language must be clear and logically rigorous, avoiding ambiguity.\\

\textbf{Output Format (Follow this format strictly):}\\
\textbf{Example:}\\
\{example\}\\

Now, based on the following atomic facts and original paragraph, generate a complex question and its corresponding answer:\\
\textbf{Question Type:}\\
\{type\}\\
\textbf{Extracted Facts:}\\
\{extracted\_facts\}\\
\textbf{Original Text:}\\
\{text\}
\end{tcolorbox}
\caption{QA Memory Question Generation Prompt}\label{tab:qa-mem-prompt}
\end{table*}

\begin{table*}
\begin{tcolorbox}[colback=gray!2!white, colframe=gray, width=\textwidth]
\textbf{Prompt Template:}\\
\textbf{Role:}\\
You are now an intelligent assistant tasked with answering the final question based on the provided reference question-answer pairs and context documents. Follow these rules strictly: Only output the final answer, without any explanation or additional content.\\

\textbf{Reference Q\&A Pairs:} \{context\}\\
\textbf{Context Document:} \{document\}\\
\textbf{Question:} \{question\}\\
\textbf{Answer:}
\end{tcolorbox}
\caption{RAG Prompt}\label{tab:rag-prompt}
\end{table*}

\end{document}